\documentclass[11pt]{article}

\usepackage[margin=1in]{geometry}
\usepackage{amsmath,amssymb,amsthm,mathtools}
\usepackage{enumitem}
\usepackage{xcolor}
\usepackage[colorlinks=true,linkcolor=blue,citecolor=blue,urlcolor=blue,
  pdftitle={Local Search for Fair Max--Min Diversification},
  pdfauthor={Sepideh Mahabadi, Shyam Narayanan, and Varun Sivashankar}]{hyperref}

\usepackage{todonotes}

\newtheorem{theorem}{Theorem}[section]

\newtheorem{lemma}[theorem]{Lemma}

\theoremstyle{definition}
\newtheorem{definition}[theorem]{Definition}
\theoremstyle{remark}

\DeclareMathOperator{\diver}{div}
\DeclareMathOperator{\Bad}{Bad}
\DeclareMathOperator{\poly}{poly}
\newcommand{\col}{\operatorname{col}}
\newcommand{\dist}{d}

\newcommand{\Z}{\mathbb{Z}}

\title{Local Search for Fair Max-Min Diversification}
\author{
Sepideh Mahabadi\\
Microsoft Research\\
\texttt{smahabadi@microsoft.com}
\and
Shyam Narayanan\footnote{Work done while a student at MIT}\\
Citadel Securities\\
\texttt{shyam.s.narayanan@gmail.com}
\and
Varun Sivashankar\\
Princeton University\\
\texttt{varunsiva@princeton.edu}
}
\date{}

\begin{document}

\maketitle

\begin{abstract}
Given $n$ points in a metric space, Max-Min diversification asks for a subset of
$k$ points maximizing the minimum pairwise distance between the selected points. This is arguably the most fundamental notion of diversity with applications across a wide range of domains.
We consider this problem under partition constraints, previously studied as Fair
Max-Min Diversification (FMMD). Here, each point has a color in $[m]$, and a feasible solution must contain exactly $k_i$ points of color $i$, where $k_1,\ldots,k_m$ are prescribed parameters satisfying $\sum_i k_i=k$.
We give the first constant factor approximation for the problem using local search, that runs in time $f(m)\cdot \poly(n)$, in which
all constraints are satisfied exactly.
All previously known algorithms either provided an $\widetilde \Theta(m)$ approximation factor, had running times exponential in the solution size $k$, or satisfied the fairness
constraints only approximately or in expectation.

We further generalize our result to the problem where each point may belong to an arbitrary subset of colors. Given lower and upper bounds $\ell_i$ and $u_i$ for every color $i$, the goal is to find $k$ points whose color counts satisfy all these bounds while maximizing their diversity.
\end{abstract}

\section{Introduction}

Selecting a small but diverse subset is a basic primitive in data
summarization, search, recommendation, and facility location
~\cite{DrosouJPS2017,QinYC2012,StoyanovichYJ2018,CelisKSDKV2018}.
Given $n$ points in a metric space, \emph{Max-Min Diversification}
asks for a subset of $k$ points maximizing the minimum pairwise
distance among the selected points ~\cite{ChandraH2001,RaviRT1994,Kuby1987,Erkut1990}. This is arguably the most
fundamental notion of diversity: rather than optimizing an aggregate
statistic, it requires every pair of selected points not to be too similar.

Max-Min Diversification is one of several distance-based diversity
objectives studied in the framework of Chandra and
Halld\'orsson~\cite{ChandraH2001} (denoted as {\em remote-edge}).
The two other practical notions of diversity are the sum of pairwise distances
(denoted as {\em Sum-Sum} or {\em remote-clique}) and the sum, over
the selected points, of the distance from each point to its nearest
selected neighbor ({\em Sum-Min} or {\em remote-pseudoforest})
~\cite{ChandraH2001,bhaskara2016linear}.
Even without additional constraints, these
problems are NP-hard.  For Max--Min Diversification, the classical
greedy algorithm of Gonzalez gives a \(2\)-approximation in
general metrics, and this factor is best possible in general metrics
unless \(P=NP\)
~\cite{Gonzalez1985,RaviRT1994}.

In many applications, the selected set must additionally satisfy prescribed budget requirements. This is used to capture different categories (e.g. in a recommendation system) or populations (e.g. for fairness purposes), or to capture other requirements such as recency in a summary \cite{MahabadiT2023,MoumoulidouMM2021}. This requirement is modeled by partitioning
the points into \(m\) color classes and prescribing a quota \(k_i\) for
each class.  The goal is to select exactly $k_i$ points of color
\(i\), while maximizing the minimum pairwise distance.  This problem,
known as \emph{Fair Max-Min Diversification} (FMMD) ~\cite{MoumoulidouMM2021}, is equivalently
Max-Min Diversification over the bases of a partition matroid.

For the two other diversity objectives above, constant-factor
polynomial-time approximations are known under matroid constraints:
local search gives a $2$-approximation for remote-clique, and an
LP-rounding approach gives a randomized $8$-approximation for
remote-pseudoforest, even under general matroid constraints
~\cite{AbbassiMT2013,bhaskara2016linear}.  The situation for Max-Min is
strikingly different. At a high level, this difference stems from the fragility of the Max-Min objective compared with the other two: its value is determined by a single closest pair. This fragility, combined with the requirement of exact color quotas, couples choices and requires coordination across groups. Indeed, for arbitrary $m$, all previously known algorithms for
general metrics incurred at least one of three limitations: either their
approximation factor grew near-linearly with $m$, or their running time was
exponential in the solution size \(k\), or they satisfied the fairness
constraints only approximately or in expectation
~\cite{MoumoulidouMM2021,AddankiMMM2022,WangMLF2023,AdriaensT2025}.

In particular, it was not known whether FMMD admits a constant-factor approximation that satisfies all quotas exactly and runs in $f(m)\poly(n)$ time. We resolve this question affirmatively, with $f$ doubly exponential in $m$. We further extend our result to overlapping colors with lower and upper quotas, as formalized below.

\subsection{Problem Formulation}
\label{sec:problem-formulation}

For a positive integer \(m\), write \([m]=\{1,\ldots,m\}\).  Let \((\mathcal{X},\dist)\) be a
metric space and let $X\subseteq \mathcal{X}$ be a finite point set given as input to our problem with \(|X|=n\).  We define the {\em diversity} of a subset \(T\subseteq X\) as
\[
        \diver(T)=\min_{\substack{p,q\in T\\p\ne q}}\dist(p,q).
\]
For \(|T|\le1\), we set \(\diver(T)=\infty\).  In either model below,
\(\mathrm{OPT}\) denotes the maximum diversity of a feasible set.

\paragraph{Exact quotas.} In the \emph{exact-quota FMMD problem}, the ground set $X$ is partitioned into color classes
\(X_1,\ldots,X_m\), and the input specifies nonnegative quotas
\(k_1,\ldots,k_m\), where \(k_i\le |X_i|\).  A set \(T\subseteq X\) is feasible if $|T\cap X_i|=k_i$ for all $i\in[m]$. We let $k=\sum_i k_i$ denote the solution size.

\paragraph{General case.} In the \emph{general case}, a point \(p\) may belong to several colors.  Its profile
\(\sigma(p)\subseteq[m]\) is the set of colors to which it belongs.  The input specifies a
cardinality \(k\), and integer bounds \(0\le \ell_i\le u_i\le k\) for $i\in [m]$.  For
\(T\subseteq X\), let
\[
        g(T)_i=|\{p\in T:i\in\sigma(p)\}|.
\]
Writing \(\ell=(\ell_1,\ldots,\ell_m)\) and \(u=(u_1,\ldots,u_m)\), a set \(T\) is
feasible if $|T|=k$ and $\ell\le g(T)\le u$, where here, as well as throughout the paper, vector inequalities are interpreted coordinatewise.  Exact quotas are the special case in which
every profile is a singleton, \(\ell_i=u_i=k_i\), and \(k=\sum_i k_i\).  

\subsection{Our Results}
In this work, we prove the following two results.
\begin{enumerate}[label=(\roman*),leftmargin=*]
    \item \emph{Exact quotas (Theorem~\ref{thm:main-approx}).}
    There is a \(6\)-approximation for exact-quota FMMD with running time
    \(2^{O(m2^m)}\poly(n)\).

    \item \emph{General case (Theorem~\ref{thm:general-approx}).}
    There is a \(6\)-approximation with running time
    \(2^{2^{O(m^2)}}\poly(n)\).
\end{enumerate}

Both algorithms satisfy all fairness constraints exactly, and their approximation factor
is independent of \(m\).  The non-polynomial dependence is only on the number of classes,
not on the solution size \(k\).  Exact quotas are a special case of the general model, but their additional structure yields a better running-time dependence on $m$ in~(i).

\subsection{Prior and Related Work}
Moumoulidou et al.~\cite{MoumoulidouMM2021} introduced fair max-min diversification
with disjoint exact quotas in general metrics.  They gave a
factor \(4\)-approximation for $m=2$, and a \((3m-1)\)-approximation for arbitrary $m$.
They also gave a \(5\)-approximation whose running time is exponential in the solution
size \(k\).  Addanki et al.~\cite{AddankiMMM2022} improved the approximation factor from $3m-1$ to \(m+1\).  They further gave constant factor approximation algorithms by relaxing the fairness constraints to hold only in expectation or up to
multiplicative $(1-\epsilon)$ factors.  They also gave algorithms for one-dimensional, Euclidean, and
bounded-doubling metrics with better approximation factor but with similar limitations on the runtime or fairness guarantees.  Kurkure et al.~\cite{KurkureSWGS2024} developed
faster algorithms for the Euclidean case, again providing only approximate
fairness guarantees.

\medskip
For disjoint groups with lower and upper quotas and a fixed solution size, Wang et
al.~\cite{WangMLF2023} obtained diversity at least
\((1-\varepsilon)\mathrm{OPT}/5\), but with a running time containing \(m^k\).
Adriaens and Tatti~\cite{AdriaensT2025} gave a randomized polynomial-time algorithm
with an improved approximation factor of $O(m/\sqrt{\log m})$ when \(k\le m\), which translates to $O(k/\sqrt{\log k})$ when $k\geq m$.

\medskip

Moumoulidou et al.~\cite{MoumoulidouMM2021} also considered overlapping groups.  Their
model has only lower quotas, with neither a fixed solution size nor upper quotas.  They
obtained approximation factor \(3\binom{m}{\lfloor m/2\rfloor}-1\).  Our general result
allows overlapping colors, lower and upper quotas, and a fixed solution size, while
satisfying every constraint exactly.  

\medskip

The problem has also been studied in massive-data
models, including streaming algorithms with exact color requirements
~\cite{WangFM2022,WangFML2023} and coreset constructions for fair
diversity objectives~\cite{MahabadiT2023}. Other work considers
variants in which the metric axioms are relaxed~\cite{GaoB2024}.

\subsection{Technical Overview}
We begin with the exact-quota problem. Assume that we have a correct guess of \(\mathrm{OPT}\). The algorithm tries all
pairwise distances, so one of these guesses is correct.

\paragraph{Preprocessing.} As a
preprocessing step, it is easy to show that one can sparsify each color class so that any two points
of the same color are at distance at least \(\mathrm{OPT}/3\).  This
loses at most a factor of \(3\). Thus if we set $r=\frac{\mathrm{OPT}}{6}$, we can assume that the points of the same color are at distance at least \(2r\), and that the optimal solution \(S\) has diversity greater than \(2r\). We then aim for diversity $r$ which would give us a factor $6$ approximation overall.
By the sparsification step, different candidates of the same color are far enough, so a point of the current solution can block at most one candidate of any fixed color.

\paragraph{The local-search potential.}
Local search has been used successfully for several diversity
maximization problems~\cite{ChandraH2001,AbbassiMT2013}. A main
question in our setting is which potential function to optimize. One can consider the 
minimum pairwise distance in the current solution $T$, which might require $\Omega(k)$ simultaneous swaps to improve. Aggregate quantities, such as the sum of pairwise distances on the other hand, do not
directly capture whether all pairs are sufficiently separated.

It turns out that the function which works is the number of {\em bad points} in $T$, i.e., points that have at least another point in $T$ within distance $r$ of them.
If there are no bad points, the current solution already
has diversity at least \(r\).  
So the goal is to show that as long as there is a bad point in $T$, there exists a set of swaps of size $h=h(m)$ that is quota-preserving, that decreases the number of bad points.  Thus the process terminates after at most \(k\leq n\) repairs.
This would yield a local-search algorithm running in
\(k\cdot n^{h(m)}\) time. One can then combine it with the coreset idea
of~\cite{MoumoulidouMM2021} to obtain a running time of
\((km)^{h(m)+1}\), at the cost of an additional constant-factor loss in
the approximation.

\paragraph{A repair of bounded size exists.}
Consider a pair of bad points in \(T\). Since \(S\) has diversity greater than \(2r\), at least one endpoint, say \(x\), does not belong to \(S\). Suppose, for intuition, that \(x\) is green. Since \(S\) and \(T\) satisfy the same quotas, \(S\setminus T\) contains an unused green candidate that can replace \(x\). This candidate may be too close to several points of \(T\), which we call its \emph{blockers}. Removing a blocker creates a vacancy in its color, which must be filled by another candidate from \(S\setminus T\); that candidate may have blockers of its own, so the repair can branch recursively across colors.

Although this recursive process may at first appear unbounded, we show that there always exists a repair whose dependencies form a rooted tree. Each candidate has at most one blocker of each color, so every node has at most \(m\) children. Moreover, using a minimality argument, we show that no color appears twice on a root-to-leaf path, and hence the tree has height at most \(m\). This gives a repair of size at most \(h(m)=m^m\), and a sharper analysis improves the bound to \(h(m)=2^{m}\).
The candidates in this tree can be inserted simultaneously while removing $x$ and the blockers of the tree nodes. This preserves all quotas, creates no new bad points, and eliminates $x$. Thus the number of bad points strictly decreases.

\paragraph{Sparsifying the candidate set using their types.}
Naively, enumerating all repairs of size at most \(h(m)\) gives a running time of \(n^{O(h(m))}\). To make the exponent of \(n\) independent of \(m\), we sparsify the set of candidate points. We assign each candidate a {\em type}, consisting of its color and the set of colors of its blockers. It is easy to see that there are at most \(m2^{m}\) types. We then define a conflict graph in which two candidates conflict if they are too close or share a blocker. The preprocessing step implies that this graph has degree bounded by a function of \(m\), and that candidates of the same type do not conflict. We can therefore retain only \(h'(m)\) candidates of each type independently, while preserving the existence of a bounded repair. Indeed, any repair of size at most \(h(m)\) can be transformed, one type at a time, into a repair using only the retained candidates. The resulting kernel has size depending only on \(m\), and exhaustive search on this kernel gives an \(f(m)\poly(n)\)-time algorithm.

\paragraph{The general case.}

When points may carry several colors and the constraints are given by lower and upper quotas, the dependency-tree argument no longer applies because adding or removing one point may affect several quotas simultaneously. Instead, we represent each candidate and its blockers by an integer vector recording their net effect on the cardinality and all color counts. We also allow additional deletion moves. Comparing the current solution with a well-separated feasible solution gives a possibly large valid repair. Using a sign-compatible decomposition based on Steinitz's lemma, we extract from it a repair of size \(2^{O(m^2)}\) that still satisfies all lower and upper quotas. The type-based sparsification argument then extends to these vector-valued moves, giving the running time \(2^{2^{O(m^2)}}\poly(n)\).

\section{Preliminaries}
\label{sec:preliminaries}

\subsection{Distance Reduction}
\label{sec:common-sparsification}

Recall that the profile \(\sigma(p)\subseteq[m]\) of a point \(p\) is the set of its colors.
Applying the greedy sparsification of Moumoulidou et
al.~\cite[proof of Theorem~6]{MoumoulidouMM2021} without truncation, we may assume that
any two points with the same profile are at distance at least \(\mathrm{OPT}/3\).  The same
untruncated reduction appears in Adriaens and
Tatti~\cite[Proposition~2]{AdriaensT2025}.

\begin{lemma}
\label{lem:profile-sparsification}
Given the guessed value \(\mathrm{OPT}>0\), one can find in polynomial time a set
\(Y\subseteq X\) such that:
\begin{enumerate}[label=(\roman*)]
    \item distinct points of \(Y\) with the same profile are at distance at least
    \(\mathrm{OPT}/3\).
    \item if \(X\) contains a feasible set of diversity at least \(\mathrm{OPT}\), then
    \(Y\) contains a feasible set of diversity greater than \(\mathrm{OPT}/3\).
\end{enumerate}
\end{lemma}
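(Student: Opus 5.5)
We construct $Y$ greedily, separately for each profile. For every profile $\pi\subseteq[m]$ that occurs, let $X_\pi=\{p\in X:\sigma(p)=\pi\}$. Scan $X_\pi$ in arbitrary order and keep a point if it is at distance at least $\mathrm{OPT}/3$ from every point of $X_\pi$ already kept; discard it otherwise. Let $Y_\pi$ be the kept set and $Y=\bigcup_\pi Y_\pi$.

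Property~(i) holds by construction, and the procedure clearly runs in polynomial time. Maximality of $Y_\pi$ also gives: every $p\in X_\pi$ is within distance less than $\mathrm{OPT}/3$ of some point of $Y_\pi$, namely itself if kept, or the earlier kept point that caused it to be discarded.

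For~(ii), let $S\subseteq X$ be feasible with $\diver(S)\ge\mathrm{OPT}$. Map each $s\in S$ to a point $\phi(s)\in Y_{\sigma(s)}$ with $\dist(s,\phi(s))<\mathrm{OPT}/3$, and set $S'=\phi(S)$.

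\emph{Injectivity of $\phi$.} If $\phi(s)=\phi(t)$ for distinct $s,t\in S$, the triangle inequality gives $\dist(s,t)<2\mathrm{OPT}/3$. This contradicts $\diver(S)\ge\mathrm{OPT}$, so $|S'|=|S|=k$.

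\emph{Feasibility.} $\phi$ preserves profiles, so $g(S')=g(S)$. Hence $S'$ satisfies $|S'|=k$ and $\ell\le g(S')\le u$, and in the exact-quota model it has exactly the same color counts as $S$.

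\emph{Diversity.} For distinct $s,t\in S$,
\[
\dist(\phi(s),\phi(t))\ge \dist(s,t)-\dist(s,\phi(s))-\dist(t,\phi(t))>\mathrm{OPT}-\tfrac{2}{3}\mathrm{OPT}=\mathrm{OPT}/3 .
\]
The inequality is strict because each displacement is strictly less than $\mathrm{OPT}/3$, so $\diver(S')>\mathrm{OPT}/3$.

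The only delicate point is this strictness. It requires the greedy rule to discard exactly those points within distance strictly less than $\mathrm{OPT}/3$ of a kept point. The threshold ``at least $\mathrm{OPT}/3$'' in~(i) and ``greater than $\mathrm{OPT}/3$'' in~(ii) are then both achieved, and no further obstacle arises.
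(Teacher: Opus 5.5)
Your proposal is correct and follows essentially the same route as the paper: a greedy maximal $\mathrm{OPT}/3$-separated set within each profile, then mapping each point of the good feasible set to a representative within distance strictly less than $\mathrm{OPT}/3$, with injectivity, profile-preserving feasibility, and the strict triangle-inequality bound. Your injectivity argument via $\dist(s,t)<2\mathrm{OPT}/3$ directly covers the different-profile case as well, which the paper notes separately.
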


\begin{proof}
For each profile \(P\) occurring in \(X\), let
\(X_P=\{p\in X:\sigma(p)=P\}\).  Greedily choose a maximal set
\(Y_P\subseteq X_P\) whose points are pairwise at distance at least \(\mathrm{OPT}/3\), and let
\(Y=\bigcup_PY_P\).  Maximality implies that every point of \(X_P\) is within distance
\(<\mathrm{OPT}/3\) of some point of \(Y_P\).

Let \(O\) be a feasible set of diversity at least \(\mathrm{OPT}\).  Map each point of
\(O\cap X_P\) to a nearest point of \(Y_P\).  This map is one-to-one within each profile,
since two points with the same image would be at distance less than \(2\mathrm{OPT}/3\).  Images
of points with different profiles are also distinct.  Every point is replaced by one with
the same profile, so the image has the same cardinality and color counts as \(O\) and is
therefore feasible.
For two distinct image points \(y,y'\), with preimages \(o,o'\), the triangle inequality
gives
\(
\dist(y,y')\ge \dist(o,o')-\dist(o,y)-\dist(o',y')
>\mathrm{OPT}-\mathrm{OPT}/3-\mathrm{OPT}/3=\mathrm{OPT}/3
\).
Thus the image has diversity greater than \(\mathrm{OPT}/3\).
\end{proof}

The algorithm tries every pairwise distance as the value of \(\mathrm{OPT}\).  For each
guess, apply Lemma~\ref{lem:profile-sparsification} and set \(r=\mathrm{OPT}/6\).  For the
correct guess, distinct points with the same profile are at distance at least \(2r\), and
the reduced instance has a feasible solution of diversity greater than \(2r\).  We return
the best solution found over all guesses.

\subsection{Local-search Potential}

For a feasible set \(T\) and a threshold \(r>0\), define
\[
        \Bad_r(T)=\{p\in T:\exists q\in T\setminus\{p\}\text{ with }\dist(p,q)<r\}.
\]
When \(r\) is fixed, we write simply \(\Bad(T)\).  A feasible set has diversity at least
\(r\) if and only if \(\Bad_r(T)=\varnothing\).
We call a feasible set \(T\) \emph{bad} if \(\Bad(T)\ne\varnothing\).
The local swaps constructed below preserve feasibility and, for some
\(x\in\Bad(T)\), produce a set \(T'\) satisfying
\[
        \Bad(T')\subseteq\Bad(T)\setminus\{x\}.
\]
Thus each swap strictly decreases \(|\Bad(T)|\), and at most \(n\) swaps are needed to
obtain a feasible set of diversity at least \(r\).

\section{Exact Quotas}
\label{sec:exact-quotas}

In this section, each point \(p\in X\) has a color \(\col(p)\in[m]\).  For a finite set
\(A\subseteq X\), let \(\chi(A)\in\Z_{\ge0}^m\) be its color-count vector, so
\(\chi(A)_i=|A\cap X_i|\).
Let \(e_i\) denote the \(i\)th standard basis vector.

\begin{theorem}[Exact-quota FMMD]
\label{thm:main-approx}
There is a deterministic algorithm for FMMD in general metrics that returns a feasible
solution \(T\) with \(\diver(T)\ge \mathrm{OPT}/6\) in time
\(2^{O(m2^m)}\poly(n)\).
\end{theorem}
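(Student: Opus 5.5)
We follow the outline in the technical overview. Guess $\mathrm{OPT}$ among all pairwise distances, apply Lemma~\ref{lem:profile-sparsification}, and set $r=\mathrm{OPT}/6$. For the correct guess, same-colored points are at distance at least $2r$, and there is a feasible $S$ with $\diver(S)>2r$. We start from any feasible $T$, which exists because $k_i\le|X_i|$. We then repeat the following step: if $\Bad(T)\neq\varnothing$, find a feasible $T'$ with $\Bad(T')\subseteq\Bad(T)\setminus\{x\}$ for some $x\in\Bad(T)$. This takes at most $n$ iterations. It remains to show that such a $T'$ always exists and differs from $T$ in $h(m)=2^m$ points, and that it can be found in $2^{O(m2^m)}\poly(n)$ time.

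\textbf{Existence of a bounded repair.} Take a close pair $x,y\in T$ with $\dist(x,y)<r$. At most one of them lies in $S$, since $\diver(S)>2r$; say $x\notin S$. For a candidate $c\in S\setminus T$, call $b\in T$ a \emph{blocker} of $c$ if $\dist(b,c)<r$. Two same-colored points of $T$ are at distance at least $2r$, so $c$ has at most one blocker of each color. Two candidates in $S$ are more than $2r$ apart, so by the triangle inequality no point of $T$ blocks two candidates. A repair is a set $C\subseteq S\setminus T$ together with its blocker set $B(C)$. Set $R=\{x\}\cup B(C)$ and require $\chi(C)=\chi(R)$. The new set is $T'=(T\setminus R)\cup C$.

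We check that $T'$ works. Points of $C$ are pairwise more than $2r$ apart, and none of them is within $r$ of a survivor of $T$, because every blocker was removed. Survivors only lose neighbors. Hence $\Bad(T')\subseteq\Bad(T)\setminus\{x\}$. Quotas are preserved because $\chi(C)=\chi(R)$.

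To build such a repair, grow a tree. The root is a candidate of color $\col(x)$, which exists because $|S\cap X_i|=|T\cap X_i|$ gives $|(S\setminus T)\cap X_i|=|(T\setminus S)\cap X_i|$. Each blocker of a node that is not already removed spawns a child: a fresh candidate of the blocker's color. A counting argument over $S\setminus T$ versus $T\setminus S$ per color keeps enough candidates available, and blockers lie in $T\setminus S$ since the candidates are at distance at most $r$ from them while $S$ is $2r$-separated.

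The key is to choose a minimal repair. Suppose a root-to-leaf path contains two nodes of the same color $i$. Then we can shortcut: the deeper node's subtree replaces the subtree rooted at the shallower node's child of color $i$, which contradicts minimality. So colors along a path are distinct. The depth is at most $m$ and each node has at most $m$ children. For the sharper bound $2^m$, note that a node with remaining color set $A$ has children whose subtrees use disjoint-enough color sets, giving the recursion $N(A)\le 1+\sum_{j\in A}N(A\setminus\{j\})$-type counting refined to $2^{|A|}$. \textbf{I expect this minimality/shortcut argument to be the main obstacle.} Removing one subtree may free a blocker that was shared elsewhere, so the accounting has to be done on the color-count balance rather than node by node. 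I would first try choosing the repair with the minimum total number of candidates and arguing via exchange.

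\textbf{Sparsified search.} Give each candidate in $X\setminus T$ the type (its color, set of blocker colors), for at most $m2^m$ types. Two candidates conflict if they are within $2r$ or share a blocker. A candidate has at most $m$ blockers, and each blocker is within $r$ of at most one candidate per color. Within the $2r$-ball of a point there are $O(1)$ candidates per color, by $2r$-separation and a packing argument applied to blockers... More precisely, we bound the conflict degree by $D=O(m^2)$. Same-type candidates do not conflict.

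For each type, greedily keep $h(m)(D+1)$ candidates, or all of them if there are fewer. Given any repair of size at most $h(m)$, replace its candidates one type at a time by kept candidates of the same type that avoid conflict with the other repair members. Enough such candidates remain because the at most $h(m)$ other members exclude at most $h(m)D$ kept ones. Same type means the same color and the same blocker colors, so the color balance is preserved. The kernel has size $m2^m\cdot O(2^m m^2)$, and enumerating subsets of size at most $2^m$ gives $2^{O(m2^m)}$ choices. Each is checked in $\poly(n)$ time, which yields the stated running time.
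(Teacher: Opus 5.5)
\textbf{Main gap: the $2^m$ bound.} Your architecture (bad-point potential, a repair extracted from $S$, a label-based kernel) is the paper's. Your shortcut is in fact sound. Since no point of $T\setminus\{x\}$ blocks two candidates of $S$, you can re-hang the deeper color-$i$ node $w$ in place of its color-$i$ ancestor $u$ and discard $\mathrm{subtree}(u)\setminus\mathrm{subtree}(w)$, and the color balance is preserved.

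The gap is the step from there to $2^m$. Distinct colors on root-to-leaf paths only bound the tree by the number of sequences of distinct colors beginning with $a=\col(x)$. That is about $e\,(m-1)!$, which is exactly what your recursion $N(A)\le 1+\sum_{j\in A}N(A\setminus\{j\})$ yields. Nothing in your argument prevents one node from having a color-$j$ child whose subtree contains color $j'$ and a color-$j'$ child whose subtree contains color $j$. With $h(m)=m^{\Theta(m)}$, enumerating size-$h$ subsets of a kernel of size $\poly(m)\,2^m h$ costs $2^{m^{\Theta(m)}}$. So the claimed $2^{O(m2^m)}\poly(n)$ bound is not established.

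\textbf{The missing idea.} What you need is \emph{global} acyclicity of the color graph, which has an edge $i\to j$ when some addition of color $i$ has a blocker replaced by an addition of color $j$. The paper takes $D\subseteq S\setminus T$ inclusion-minimal subject to the \emph{inequality} $\chi(B_x(D))+e_a\le\chi(D)$. Minimality forces equality. A color cycle $i_1\to\cdots\to i_t\to i_1$ witnessed by $s_1,\ldots,s_t$ can then simply be deleted, because $\chi(B_x(D_0))\ge\chi(D_0)$ for $D_0=\{s_1,\ldots,s_t\}$. Removing the $s_j$ also releases their other blockers; this only helps the inequality, whereas an equality-preserving subtree surgery would have to re-route those blockers' replacements. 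One topological order then governs every root-to-vertex path, and each node has at most one child per color. Hence a vertex is determined by the set of colors on its path, and $|D|\le 2^{m-1}$. A minimum-cardinality repair is automatically inclusion-minimal in this sense, so your choice of repair is fine once this argument is supplied.

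\textbf{Two smaller slips in the kernel.}
\begin{enumerate}
\item The conflict threshold must be $r$, not $2r$. In a general metric, arbitrarily many $2r$-separated same-color points can lie within $2r$ of a point, and no packing argument is available. Within distance $r$ there is at most one point per color, which gives the $O(m^2)$ degree bound.
\item Blocker sets, and hence types, must exclude $x$, as in the paper's $B_x(y)$. Otherwise, replacing a candidate blocked by $x$ with a same-type candidate blocked by a different color-$a$ point changes $\chi(\{x\}\cup B(C))$ and breaks the balance.
\end{enumerate}
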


Let \(\mathrm{OPT}\) denote the optimum value of the original instance.
Apply Lemma~\ref{lem:profile-sparsification}, and set
\(r=\mathrm{OPT}/6\).  For notational simplicity, we continue to write
\(X\) for the reduced ground set.  Distinct points of the same color
in \(X\) are at distance at least \(\mathrm{OPT}/3=2r\), and the
reduced instance contains a feasible set \(S\) with
\(\diver(S)>\mathrm{OPT}/3=2r\).  We fix such a set \(S\) throughout
this section.

\subsection{Certificates}
\label{sec:local-certificates}

Let \(T\) be any feasible solution.  If \(\Bad(T)=\varnothing\), then \(T\) already has
diversity at least \(r\).  Suppose instead that \(T\) is bad.  We repair it by removing a
bad point \(x\) and adding new points.  Every old point within distance less than \(r\) of
an addition must also be removed, and the additions must replace all these removed points
color by color.

Fix a bad point \(x\in \Bad(T)\) and let \(a=\col(x)\).  For a candidate point
\(y\in X\setminus T\), define its blocker set, excluding \(x\), by
\[
        B_x(y)=\{p\in T\setminus\{x\}:\dist(p,y)<r\}.
\]
For \(A\subseteq X\setminus T\), write
\[
        B_x(A)=\bigcup_{y\in A}B_x(y).
\]
We call \(A\) \emph{compatible} if its points are pairwise at distance at least \(r\) and
the blocker sets \(B_x(y)\), \(y\in A\), are pairwise disjoint. 

\begin{definition}[Certificate]
\label{def:certificate}
A certificate for \(x\) is a compatible set \(A\subseteq X\setminus T\) such that, for
every color other than \(a\), the additions and their blockers contain the same number of
points of that color, while for color \(a\), the additions contain one extra point.
Equivalently,
\[
        \chi(A)-\chi(B_x(A))=e_a.
\]
\end{definition}

\begin{lemma}
\label{lem:certificate-improves}
If \(A\) is a certificate for \(x\in\Bad(T)\), then
\[
        T' = T\setminus\bigl(B_x(A)\cup\{x\}\bigr)\cup A
\]
is feasible and satisfies
\[
        |\Bad(T')|<|\Bad(T)|.
\]
\end{lemma}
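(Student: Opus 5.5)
We prove feasibility and the strict decrease of the potential separately. Throughout, let \(R=B_x(A)\cup\{x\}\) be the removed set; \(R\subseteq T\), and \(x\notin B_x(A)\) by the definition of \(B_x(\cdot)\).

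\emph{Feasibility.} Since \(A\subseteq X\setminus T\), the set \(A\) is disjoint from \(T\). Hence
\[
\chi(T')=\chi(T)-\chi(B_x(A))-e_a+\chi(A).
\]
The certificate identity \(\chi(A)-\chi(B_x(A))=e_a\) then gives \(\chi(T')=\chi(T)\). So \(T'\) has exactly \(k_i\) points of every color and is feasible. We never need the disjointness of the individual blocker sets here, since \(\chi(B_x(A))\) counts the union.

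\emph{Strict decrease.} We show \(\Bad(T')\subseteq\Bad(T)\setminus\{x\}\). Because \(x\in\Bad(T)\) and \(x\notin T'\), this suffices. Take \(p\in\Bad(T')\) with a witness \(q\in T'\setminus\{p\}\) such that \(\dist(p,q)<r\), and split into cases by where \(p\) and \(q\) lie.
\begin{enumerate}[label=(\alph*)]
\item If \(p,q\in A\): this is impossible, because a compatible \(A\) has its points pairwise at distance at least \(r\).
\item If exactly one of \(p,q\) lies in \(A\), say \(y\in A\), and the other \(z\) lies in \(T\setminus R\): then \(z\in T\setminus\{x\}\) and \(\dist(z,y)<r\), so \(z\in B_x(y)\subseteq B_x(A)\subseteq R\). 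This contradicts \(z\notin R\).
\item If \(p,q\in T\setminus R\): then \(p\in T\) has a close partner \(q\in T\), so \(p\in\Bad(T)\), and \(p\ne x\) because \(x\in R\).
\end{enumerate}
So every bad point of \(T'\) is an old bad point of \(T\) other than \(x\), and in particular no point of \(A\) is bad in \(T'\). Therefore \(|\Bad(T')|\le|\Bad(T)|-1\).

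The only delicate step is case (b). It works because the blocker set is taken with respect to all of \(T\setminus\{x\}\), not just some of it. So every surviving old point that is close to an addition has already been removed.
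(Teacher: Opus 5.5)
Your proposal is correct and follows the paper's own proof: feasibility comes from the removed set $B_x(A)\cup\{x\}$ having the same color-count vector as $A$, which uses $x\notin B_x(A)$ and $A\cap T=\varnothing$. The containment $\Bad(T')\subseteq\Bad(T)\setminus\{x\}$ rests on the same three kinds of pairs: new--new pairs are excluded by compatibility, new--old pairs are excluded because close old points are blockers and were removed, and old--old pairs are unchanged. You simply make the case split and these disjointness facts explicit.
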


\begin{proof}
The color-count condition gives
\[
        \chi(A)=\chi(B_x(A))+e_a.
\]
Since \(x\) has color \(a\), the removed set \(B_x(A)\cup\{x\}\) has the same color-count
vector as \(A\).  Hence \(T'\) satisfies every quota exactly.

The points of \(A\) are pairwise at distance at least \(r\) because \(A\) is compatible.
Every point of \(T\setminus(B_x(A)\cup\{x\})\) is also at distance at least \(r\) from
every point of \(A\), since all closer points belong to \(B_x(A)\) and were removed.  The
pairs among the old points that remain are unchanged.  Therefore
\[
        \Bad(T')\subseteq \Bad(T)\setminus\{x\}.
\]
Since \(x\in\Bad(T)\), this containment gives \(|\Bad(T')|<|\Bad(T)|\).
\end{proof}

\subsection{A Bounded Certificate}
\label{sec:hidden-certificate}

\begin{lemma}
\label{lem:hidden-certificate}
If \(T\) is bad, then some bad point \(x\in\Bad(T)\) has a certificate \(A^\star\) with
\(|A^\star|\le 2^{m-1}\).
\end{lemma}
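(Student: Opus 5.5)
The plan is to build the certificate from points of $S\setminus T$, using the separation properties of $S$ and of the reduced instance. First I fix the bad point. Some pair $p,q\in T$ has $\dist(p,q)<r$. Since $\diver(S)>2r$, at least one endpoint lies outside $S$; call it $x$ and let $a=\col(x)$. Put $U=S\setminus T$ and $W=T\setminus S$; feasibility of $S$ and $T$ gives $\chi(U)=\chi(W)$. I then record four structural facts.
\begin{enumerate}[label=(\alph*)]
\item For $y\in U$, every blocker in $B_x(y)$ lies in $W\setminus\{x\}$. A point $p\in T\cap S$ has $\dist(p,y)>2r$.
\item For $y\in U$, the set $B_x(y)$ contains at most one point of each color. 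Same-colored points are at distance at least $2r$, so two of them cannot both be within $r$ of $y$.
\item Every $p\in T$ blocks at most one point of $U$. Two points of $U$ within $r$ of $p$ would be at distance less than $2r$ from each other.
\item Points of $U$ are pairwise at distance greater than $2r$.
\end{enumerate}
By (c) and (d), every subset of $U$ is compatible. So it suffices to find $A\subseteq U$ with $\chi(A)-\chi(B_x(A))=e_a$ and $|A|\le 2^{m-1}$.

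\emph{Existence of some certificate, as a tree.} I grow a rooted tree whose nodes are distinct points of $U$. The root is some $y_0\in U$ of color $a$; such a point exists since $|U\cap X_a|=|W\cap X_a|\ge 1$. Whenever a tree node $y$ has a blocker $p$ of color $c$, the removal of $p$ creates a ``vacancy'' of color $c$. I attach as a child of $y$ a fresh point of $U$ of color $c$. This is always possible. By (c), vacancies correspond injectively to points of $W\setminus\{x\}$, so color-$c$ vacancies number at most $|W\cap X_c|-[c=a]$. Meanwhile the tree already uses one color-$a$ point for the root, and $|U\cap X_c|=|W\cap X_c|$, so a fresh point of color $c$ remains. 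The process terminates because $U$ is finite. The node set $A$ then satisfies $\chi(A)=\chi(B_x(A))+e_a$ exactly: each non-root node fills exactly one vacancy, and each blocker creates exactly one vacancy.

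\emph{Shrinking the tree.} Among all such trees (over all valid choices) I take one with the fewest nodes. The key claim is that no color repeats on a root-to-leaf path. Suppose a node $y$ of color $c$ has a proper descendant $y'$ also of color $c$. Then the subtree rooted at $y'$ is by itself a valid repair for the vacancy that $y$ was filling. It fills one color-$c$ vacancy and is closed under its own blockers, and its blockers are disjoint from everything else by (c). Splicing it in place of the subtree at $y$ yields a smaller tree, contradicting minimality; the same argument handles the root color $a$. By (b), the children of a node have pairwise distinct colors.

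\emph{Size bound, the main obstacle.} Height at most $m$ with branching at most $m-1$ only gives a bound like $m^m$. To reach $2^{m-1}$, I would prove by induction that a minimal subtree whose root has $j$ colors still ``available'' has at most $2^j$ nodes, via the recursion $f(j)\le 1+\sum_{i<j}f(i)=2^j$, starting from $j=m-1$ at the root. This needs the children's available color sets to be nested. Concretely, I order the children $z_1,\dots,z_t$ of a node and want the subtree of $z_i$ to avoid the colors of $z_1,\dots,z_{i-1}$ as well as the ancestors' colors. The first tool I would try is a second exchange argument. Suppose the subtree of $z_i$ contains a node $w$ whose color equals that of an earlier sibling $z_{i'}$. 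Then one can reroute: the vacancy filled by $z_{i'}$ can instead be filled by $w$'s subtree, or vice versa. Choosing the tree that minimizes size, and then lexicographically minimizes a suitable sibling-ordered potential, should rule this out. Making this exchange rigorous, so that it genuinely shrinks or lexicographically improves the tree without breaking blocker disjointness, is where I expect most of the work to lie.
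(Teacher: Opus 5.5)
Your facts (a)--(d) are correct. So are the greedy construction of a tree whose node set $A\subseteq S\setminus T$ satisfies $\chi(A)=\chi(B_x(A))+e_a$, and the splicing argument showing that a minimum-size tree repeats no color on a root-to-leaf path. But these give only a bound like $m^m$. The step that is supposed to give $2^{m-1}$, namely the nesting of the siblings' available color sets, is not proved: you sketch a ``second exchange argument'' and say yourself that making it rigorous is the main work. As described, it does not close. Swapping the subtree of an earlier sibling $z_{i'}$ with the subtree of a same-colored node $w$ below $z_i$ keeps the node set and the size unchanged, so it cannot by itself contradict size-minimality. You also neither specify a potential that strictly decreases nor show that the swaps eventually create a repeated color on a path that splicing could remove. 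So the proposal does not establish the lemma.

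The missing idea is that the color dependency graph is \emph{globally} acyclic. The paper gets this as follows.
\begin{enumerate}[label=(\arabic*)]
\item Take an inclusion-minimal $D\subseteq S\setminus T$ in the \emph{relaxed} family $\chi(B_x(D))+e_a\le\chi(D)$. Minimality forces equality, so blockers can be matched to additions of the same color.
\item Define the color graph with an edge $i\to j$ whenever an addition of color $i$ has a blocker matched to an addition of color $j$. In your tree this is exactly the parent-to-child colors.
\item Suppose this graph had a cycle $i_1\to\cdots\to i_t\to i_1$. Choose, for each edge, a node $s_j$ of color $i_j$ with a blocker of color $i_{j+1}$, and let $D_0=\{s_1,\dots,s_t\}$. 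Then $\chi(B_x(D_0))\ge\chi(D_0)$. Since blocker sets are disjoint, $D\setminus D_0$ still satisfies the relaxed inequality, which contradicts minimality.
\end{enumerate}
The relaxation is essential here. $D\setminus D_0$ is generally not tight, hence not one of your trees, so minimizing only over exact trees does not immediately yield a contradiction. Minimizing over the relaxed family and then proving tightness does.

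Once the color graph is acyclic, the dependency graph on $D$ is a tree rooted at the unmatched color-$a$ addition, i.e.\ one of your trees. Fix a topological order $\mathcal O$ of the colors.
\begin{itemize}
\item Colors increase in $\mathcal O$ along every root-to-node path, so each such path is determined by its set of colors.
\item Each node has at most one child of each color, so distinct nodes give distinct subsets of $[m]\setminus\{a\}$. This gives $|D|\le 2^{m-1}$.
\end{itemize}
Ordering siblings increasingly in $\mathcal O$ also gives exactly the nesting your recursion $f(j)\le 1+\sum_{i<j}f(i)$ needs, with no further exchange argument.
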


\begin{proof}
Pick a bad edge \(\{u,v\}\subseteq T\).  Since \(\diver(S)>2r\), \(u\) and \(v\) cannot
both belong to \(S\).  Choose \(x\in\{u,v\}\setminus S\), and put
\(a=\col(x)\) and \(S^+=S\setminus T\). 

The sets \(B_x(s)\), \(s\in S^+\), are disjoint, since a common blocker would put two
points of \(S\) less than \(2r\) apart.  Each contains at most one point of each color,
since same-color points are \(2r\)-separated.  Moreover,
\(B_x(S^+)\subseteq T\setminus S\): a blocker in \(T\cap S\) would be too close to another
point of \(S\).

\paragraph{A minimal replacement.}
We have \(B_x(S^+)\cup\{x\}\subseteq T\setminus S\), so
\[
        \chi(B_x(S^+))+e_a
        \le\chi(T\setminus S).
\]
Since \(T\) and \(S\) contain the same number of points of every color,
\[
        \chi(T\setminus S)=\chi(S\setminus T)=\chi(S^+).
\]

Choose an inclusion-minimal \(D\subseteq S^+\) satisfying this inequality, i.e., $\chi(B_x(D))+e_a\le\chi(D)$. Note that $D$ cannot be the $\emptyset$ as we should have $\chi(D)\geq e_a$.  In fact we show that the inequality is tight:
\[
        \chi(B_x(D))+e_a=\chi(D).
\]
If it were strict in color \(i\), choose \(s\in D\) of color \(i\).  Since the blocker sets
are disjoint,
\[
        B_x(D\setminus\{s\})=B_x(D)\setminus B_x(s).
\]
Removing \(s\) decreases the right side by \(e_i\) and subtracts \(\chi(B_x(s))\) from the
left, so the inequality still holds, contradicting the minimality of \(D\).

\paragraph{The dependency tree.}
Thus, for each color other than \(a\), we can match the additions in \(D\) one-to-one with the blockers
of that color.  For color \(a\), we match all but one addition, and denote the unmatched
addition by \(s_0\).  Consider a directed graph $G$ on the set of additions in $D$, where for every blocker of an addition \(s\), draw an edge from \(s\) to
the addition matched to that blocker.  Thus \(s_0\) has indegree zero and every other
vertex has indegree one.

Define another directed (multi-)graph $G'$ on the set of colors by putting an edge \(i\to j\) whenever some addition of
color \(i\) has a blocker matched to an addition of color \(j\), i.e., for any edge $(a,b)$ in $G$ where $a$ has color $i$ and $b$ has color $j$ we put an edge $(i,j)$ in $G'$.  This color graph is
acyclic: Otherwise, consider a directed
cycle in $G'$ like \(i_1\to i_2\to\cdots\to i_t\to i_1\).  For each edge choose a point
\(s_j\in D\) of color \(i_j\) with a blocker of color \(i_{j+1}\), and let
\(D_0=\{s_1,\ldots,s_t\}\).  The chosen blockers have the same colors as the points of
\(D_0\), so
\[
        \chi(B_x(D_0))\ge\chi(D_0).
\]
Since the blocker sets are disjoint,
\[
        \chi(B_x(D\setminus D_0))
        =\chi(B_x(D))-\chi(B_x(D_0)).
\]
Using \(\chi(B_x(D))+e_a=\chi(D)\) and
\(\chi(B_x(D_0))\ge\chi(D_0)\), we obtain
\[
        \chi(B_x(D\setminus D_0))+e_a
        \le \chi(D)-\chi(D_0)
        =\chi(D\setminus D_0).
\]
and again note that $D\setminus D_0\neq \emptyset$. This contradicts the minimality of \(D\).

A directed cycle in the graph $G$ on \(D\) would give a directed closed walk in the color graph
$G'$ and hence a directed cycle there.  Thus the graph $G$ on \(D\) is also acyclic.  Since \(s_0\) is its
only vertex of indegree zero and every other vertex has indegree one, it is a tree rooted at
\(s_0\). Note that $G'$ is acyclic and thus it naturally gives us a topological ordering $\mathcal{O}$ on the colors.   Now consider the graph $G$. Our goal is to bound the number of vertices in $G$ (i.e., $|D|$) which is equal to the number of root-to-vertex paths in $G$. Note that along every root-to-vertex path in $G$,
the colors are distinct and should only appear according to the ordering $\mathcal{O}$.  Thus, given the set of colors on the path determines their sequence as well. Moreover, each vertex has at most one child of any
given color because otherwise its blocker set would have two points of the same color which is not possible by Lemma~\ref{lem:profile-sparsification}, so this sequence identifies at most one vertex in $G$.  There are only \(2^{m-1}\) subsets
of colors excluding \(a\), and hence $|D| \le 2^{m-1}$.

Finally, for \(D\subseteq S\setminus T\), its points are more than \(2r\) apart, and its blocker
sets are disjoint.  Thus \(D\) is compatible.  The tight equality says
\(\chi(D)-\chi(B_x(D))=e_a\), so \(A^\star=D\) is a certificate for \(x\) of the claimed
size.
\end{proof}
\subsection{Sparsifying the Candidate Set and Finding a Certificate}
\label{sec:exact-kernel}

\begin{lemma}
\label{lem:find-exact-certificate}
Suppose distinct points of the same color are at distance at least \(2r\), and the instance
has an optimal solution of diversity greater than \(2r\).  Given a bad feasible set \(T\),
one can find a bad point \(x\) and a certificate for \(x\) in time
\(2^{O(m2^m)}\poly(n)\).
Without these assumptions, the same search may report \(\mathrm{FAIL}\).
\end{lemma}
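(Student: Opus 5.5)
We will follow the plan laid out in the technical overview: enumerate the bad point \(x\) and, for each \(x\), run an exhaustive search for a certificate of size at most \(2^{m-1}\) over a kernel of candidates whose size depends only on \(m\). Correctness of any output is automatic, since every set we output is checked directly against Definition~\ref{def:certificate} (compatibility and \(\chi(A)-\chi(B_x(A))=e_a\)). Without the hypotheses we simply report \(\mathrm{FAIL}\) if no \(x\) yields a certificate. The work is in showing that, under the hypotheses, the kernel still contains a certificate for the bad point provided by Lemma~\ref{lem:hidden-certificate}.

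\textbf{Types and kernel.} Fix \(x\in\Bad(T)\) with \(a=\col(x)\). For each candidate \(y\in X\setminus T\), compute \(B_x(y)\) and discard \(y\) if \(B_x(y)\) contains two points of the same color; such a \(y\) can never lie in \(S\) (blockers of a point of \(S\) have distinct colors). Define the \emph{type} of \(y\) as the pair \((\col(y),\col(B_x(y)))\in[m]\times 2^{[m]}\), giving at most \(m2^m\) types.

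Next define the conflict graph on candidates: \(y\sim y'\) if \(\dist(y,y')<r\) or \(B_x(y)\cap B_x(y')\neq\varnothing\). We bound the degrees using same-color separation. Points of a fixed color within distance \(<r\) of \(y\) are pairwise \(\ge 2r\) apart, so there is at most one such point per color, and at most \(m\) in total. Candidates sharing a blocker \(p\) lie within \(<r\) of \(p\), so again there is at most one per color, and \(|B_x(y)|\le m\). Hence every degree is at most \(\Delta=m+m^2\). Two distinct candidates of the same type have the same color, so they are \(\ge2r\) apart. They could still share a blocker, but then both would be within \(r\) of that blocker, which contradicts the \(2r\) separation. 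So same-type candidates are non-conflicting, as the overview claims. For each type, retain an arbitrary \(L=h(\Delta+1)\) candidates with \(h=2^{m-1}\), or all of them if fewer exist. The kernel \(K\) then has size \(m2^m\cdot L=2^{O(m)}\).

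\textbf{Exchange argument (main obstacle).} Let \(A^\star\) be the certificate from Lemma~\ref{lem:hidden-certificate}, with \(|A^\star|\le h\). We process types one at a time. For a type \(\tau\), suppose some \(y\in A^\star\) of type \(\tau\) is not retained; then at least \(L\) candidates of type \(\tau\) were retained. Among these, at most \(|A^\star|(\Delta+1)\le L-1\) conflict with, or equal, a current member of \(A^\star\), so some retained \(y'\) does not. Replace \(y\) by \(y'\). Compatibility is preserved, since \(y'\) conflicts with no other member. Because \(y'\) has the same color and the same blocker-color multiset, and the blocker sets remain disjoint, \(\chi(A)-\chi(B_x(A))\) is unchanged. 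The new set is therefore still a certificate of the same size, and each such replacement strictly increases the number of retained members, so the process terminates with a certificate inside \(K\). This step depends on the type recording the entire color vector of \(B_x(y)\), which holds because blockers have distinct colors, and on the bound for \(\Delta\). These are the points I would check most carefully.

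\textbf{Search and running time.} For each \(x\) (at most \(n\) choices), build \(K\) in \(\poly(n)\) time. Then enumerate all subsets of \(K\) of size at most \(2^{m-1}\): there are \(|K|^{2^{m-1}}=2^{O(m2^m)}\) of them, and each is tested in \(\poly(n)\) time. By Lemma~\ref{lem:hidden-certificate} and the exchange argument, some \(x\) succeeds, for a total running time of \(2^{O(m2^m)}\poly(n)\).
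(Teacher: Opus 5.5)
Your proposal is correct and follows essentially the paper's proof: the same conflict graph, the same type (label) classes with $2^{m-1}(\Delta+1)$ retained candidates each, the same exchange argument, and the same enumeration over the kernel. The only differences are that you replace one candidate at a time rather than a whole class at once, and that your bound $|A^\star|(\Delta+1)\le L-1$ fails as written when $|A^\star|=2^{m-1}$. The fix is easy: the unretained $y$ neither equals nor conflicts with any retained candidate of its own type, so the relevant count is at most $(|A^\star|-1)(\Delta+1)<L$.
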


\begin{proof}
For each \(x\in\Bad(T)\), define a graph \(H_x\) on \(X\setminus T\), adding an edge between candidates
\(y,z\) when \(\dist(y,z)<r\) or \(B_x(y)\cap B_x(z)\ne\varnothing\).  Its independent
sets are exactly the compatible candidate sets.

For each candidate \(y\) define its {\em label} as
\[
        \lambda_x(y)=\bigl(\col(y),\{\col(p):p\in B_x(y)\}\bigr).
\]
Because \(B_x(y)\) contains at most one point of each color, this label determines
both \(\chi(\{y\})\) and \(\chi(B_x(y))\).  Thus replacing a candidate by one with the
same label preserves its contribution to the certificate equation.
There are at most \(m2^{m-1}\) different labels.  Candidates with the same label are independent:
they have the same color, and a shared blocker would place two of them at distance less
than \(2r\).  Moreover, the maximum degree \(\Delta=\Delta(H_x)\) is at most
\((m-1)^2\).  Indeed, a candidate \(y\) has at most one neighbor within distance $r$ of each other
color, and hence at most \(m-1\) in total.  Further, it has at most \(m-1\) blockers, and each
blocker can be shared with at most one candidate of each color other than \(\col(y)\) and
the blocker's color, giving at most \(m-2\) such candidates.

Let \(C=2^{m-1}\).  In each label class, if the class contains at most
\(C(\Delta+1)\) candidates, retain all of them.  Otherwise, retain an arbitrary subset of
\(C(\Delta+1)\) candidates from that class.  We call a label class
\emph{truncated} in the latter case.  By Lemma~\ref{lem:hidden-certificate}, for some \(x\in\Bad(T)\), the graph
\(H_x\) contains a certificate \(A^\star\) of size at most \(C\), extracted from the
optimal solution.  Starting with \(A^\star\), process its label classes one at a time and
replace its candidates by retained candidates of the same label.  Throughout this process,
the current set remains a certificate and all candidates in previously processed classes
are retained.

If a label class was not truncated, then every candidate in that class was retained, so no
replacement is needed.  Now consider a truncated class from which the current certificate
uses \(a\) candidates.  Remove these \(a\) candidates temporarily.  The current choices in
all other classes number at most \(C-a\), so they conflict with at most
\((C-a)\Delta\) of the \(C(\Delta+1)\) retained candidates in this class.  Hence at least
\[
        C(\Delta+1)-(C-a)\Delta
        =C+a\Delta
        \ge a
\]
retained candidates do not conflict with the current choices.  Choose any \(a\) of them.
They do not conflict with one another because each label class is independent.  Since they
have the same label as the candidates they replace, the certificate condition is unchanged.
After all label classes have been processed, the resulting certificate uses only retained
candidates.

The total number of retained candidates is at most
\(m2^{m-1}C(\Delta+1)=O(m^3 4^m)\).  Enumerate its subsets of size at most \(C\),
and test the certificate conditions directly.  For the point \(x\) guaranteed above, the
enumeration finds a certificate.  The running time is \(2^{O(m2^m)}\) up to polynomial
factors in \(n\).  Trying all bad points contributes another polynomial factor.
\end{proof}

\begin{proof}[Proof of Theorem~\ref{thm:main-approx}]
If \(k\le1\), any feasible set is optimal, and if \(\mathrm{OPT}=0\), any feasible set is
a valid approximation.  Guess \(\mathrm{OPT}\) by trying every pairwise distance.  For
each guess, apply Lemma~\ref{lem:profile-sparsification} and set
\(r=\mathrm{OPT}/6\).  If some color has fewer
than its quota in the reduced instance, discard this guess.  Otherwise choose any
quota-feasible set \(T\).

While \(T\) is bad, use Lemma~\ref{lem:find-exact-certificate} to find a certificate for
some bad point \(x\), and apply its exchange.  Lemma~\ref{lem:certificate-improves} shows
that the new set remains feasible and has fewer bad points.  If no certificate is found,
discard the current guess.  At most \(n\) exchanges are applied, and when the process stops,
\(\diver(T)\ge r\).

For the correct guess, the reduced instance has same-color distance at least \(2r\)
and an optimal solution of diversity greater than \(2r\).  The certificate search therefore
never fails for the correct guess, and the resulting set has diversity at least
\(\mathrm{OPT}/6\).  Trying every distance guess and performing at most \(n\) certificate
searches for each one changes the running time only by a polynomial factor.
\end{proof}

\section{General Case}
\label{sec:general-case}

We now study the lower/upper model defined in
Section~\ref{sec:problem-formulation}.  Recall that the profile
\(\sigma(p)\subseteq[m]\) of a point \(p\) is the set of colors to which it belongs.  For
\(U\subseteq X\), let \(g(U)\in\Z_{\ge0}^m\) be the color-count vector, so
\(g(U)_j=|\{p\in U:j\in\sigma(p)\}|\).
For a single point, write \(g(p)=g(\{p\})\).  Thus \(T\) is feasible if \(|T|=k\) and
\(\ell\le g(T)\le u\).  Let \(\Pi\) be the set of distinct profiles in \(X\), put
\(\tau=|\Pi|\le 2^m\), and let \(X_P=\{p\in X:\sigma(p)=P\}\) for \(P\in\Pi\).

A point contributes one unit to the cardinality and one unit to every color in its
profile.  We collect these contributions in
\(w(p)=(1,\mathbf{1}_{\sigma(p)})\in\{0,1\}^{m+1}\), and write
\(w(U)=\sum_{p\in U}w(p)\).  Thus the zeroth coordinate of \(w(U)\) is \(|U|\), while its
remaining coordinates are \(g(U)\).
For \(v\in\Z^{m+1}\), write \(v_0\) for its zeroth coordinate and \(v_{1:m}\) for the
vector consisting of its remaining \(m\) coordinates.

\begin{theorem}[General FMMD]
\label{thm:general-approx}
FMMD with overlapping colors and lower and upper bounds in general metrics admits a
deterministic \(6\)-approximation in time \(2^{2^{O(m^2)}}\poly(n)\).
\end{theorem}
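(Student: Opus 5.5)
The plan follows the exact-quota template: guess $\mathrm{OPT}$, apply Lemma~\ref{lem:profile-sparsification}, set $r=\mathrm{OPT}/6$, and run local search on $|\Bad(T)|$. After sparsification, distinct points with the same profile are $2r$-apart, and there is a feasible $S$ with $\diver(S)>2r$. Two changes are needed. First, the initial feasible $T$ must be found in $f(m)\poly(n)$ time. Since only the profile counts matter, we enumerate how many points of each profile to take; this is an integer program in $\tau\le 2^m$ variables with box bounds $|X_P|$, and is solvable by Lenstra-type algorithms in time $2^{2^{O(m)}}\poly(n)$. Second, certificates become vector-valued. For a bad point $x$, a certificate is a compatible $A\subseteq X\setminus T$ together with an optional set $R\subseteq T$ of extra deletions such that $T'=T\setminus(B_x(A)\cup\{x\}\cup R)\cup A$ is feasible. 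The analogue of Lemma~\ref{lem:certificate-improves} is immediate: no new close pairs are created and $x$ leaves, so $\Bad(T')\subseteq\Bad(T)\setminus\{x\}$. Feasibility is now the condition $w(T)-w(B_x(A))-w(x)-w(R)+w(A)$ has zeroth coordinate $k$ and $\ell\le(\cdot)_{1:m}\le u$.

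For the bounded-certificate lemma, choose $x$ in a bad pair with $x\notin S$, and let $S^+=S\setminus T$. As before, the blocker sets $B_x(s)$ for $s\in S^+$ are disjoint, lie in $T\setminus S$, and contain at most one point per profile. The \emph{full} repair $A=S^+$ with $R=(T\setminus S)\setminus(B_x(S^+)\cup\{x\})$ turns $T$ into $S$, so it is valid. Decompose it into elementary integer vectors in $\Z^{m+1}$. Each $s\in S^+$ gives a move $w(s)-w(B_x(s))$, whose entries are bounded by $2^m$ in absolute value since a blocker set has at most one point per profile. The point $x$ gives $-w(x)$, and each point of $R$ gives $-w(p)$. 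These sum to $w(S)-w(T)$. The goal is a small sub-multiset containing $-w(x)$ whose sum $v$ still keeps $w(T)+v$ within the constraints. Concretely, I would require $v_0=0$, and for every color $j$, require $v_j$ to lie between $0$ and $(w(S)-w(T))_j$. Then $g(T)+v_{1:m}$ lies coordinatewise between $g(T)$ and $g(S)$, both feasible. To get this I would apply Steinitz's lemma in the sign-compatible form. Order the moves so that all prefix sums stay within $O(m\cdot 2^m)$ of the line segment to the total. Then, by a pigeonhole argument over the bounded lattice region, $2^{O(m^2)}$ consecutive moves can be split off into a sub-sum conformal to the total, meaning each coordinate has the same sign as the total and is no larger. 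Repeatedly peeling off such conformal pieces partitions the full repair into pieces of size $2^{O(m^2)}$, each conformal to $w(S)-w(T)$. The piece containing $-w(x)$ is the certificate. The zeroth coordinate is handled because the total has zeroth coordinate $0$, and conformality forces each piece to have zeroth coordinate $0$.

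Compatibility of the chosen additions is inherited from $S$. The kernel step then mirrors Lemma~\ref{lem:find-exact-certificate}. The label of a candidate is its profile together with the set of profiles of its blockers, giving at most $2^m\cdot 2^{2^m}$ labels. Deletion points in $R$ are labeled by their profile, and any unused point of $T$ with that profile can substitute for one. The conflict degree is bounded by $2^{O(m)}$, and same-label candidates are independent because of the $2r$ separation. Retaining $C(\Delta+1)$ candidates per label, with $C=2^{O(m^2)}$, and running the same replacement argument gives a kernel of size $2^{2^{O(m)}}$ in which a certificate of size $C$ exists. Enumerating all subsets of size at most $C$ costs $\bigl(2^{2^{O(m)}}\bigr)^{2^{O(m^2)}}=2^{2^{O(m^2)}}$.

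I expect the main obstacle to be the decomposition step. The standard Steinitz argument only yields a zero-sum sub-multiset when the total is zero. Here the total $w(S)-w(T)$ is nonzero, so the extraction must be conformal to it rather than zero-sum, and it must contain the specific move $-w(x)$. I would handle this in two steps. First, treat $w(S)-w(T)$ as additional unit steps, so the total becomes zero. Second, apply the Graver-basis and Steinitz bound: the elements of $\{v:\ Mv=0\}$ with $\|M\|_\infty\le 2^m$ in $m+1$ rows have $\ell_1$-norm at most $(2(m+1)2^m+1)^{m+1}=2^{O(m^2)}$. This yields a conformal decomposition into small circuits, and I would take the one using $x$'s move.
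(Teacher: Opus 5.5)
Your proposal is correct, and its overall architecture matches the paper: guessing $\mathrm{OPT}$, profile sparsification, Lenstra for the initial set, the bad-point potential with addition and deletion moves, and the type-based kernel retaining $C(\Delta+1)$ atoms per class.

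The genuine difference is in the bounded-certificate step. The paper decomposes only $\Omega$, the moves other than the removal of $x$. Its total $w(S)-w(T\setminus\{x\})$ has zeroth coordinate $1$. The part $P_*$ carrying that unit is only guaranteed to land between $g(T)-g(x)$ and $g(S)$, and $g(T)-g(x)$ may violate lower bounds. So the paper must add up to $m$ helper parts $P_{i(j)}$, obtaining certificates of size $(m+1)H$.

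You instead put $-w(x)$ into the decomposed collection, so the total becomes $w(S)-w(T)$. This is a difference of two feasible count vectors with zero cardinality coordinate. Every sign-compatible part therefore has zeroth coordinate $0$ and lands in the box between $g(T)$ and $g(S)$. The single part containing $x$ is then already a certificate of size at most $H$. This amounts to applying Lemma~\ref{lem:sign-compatible-decomposition} verbatim to $\Omega$ plus one extra vector, and it is a cleaner argument with a slightly better bound.

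Your tool for the decomposition is augmenting by unit vectors and citing the conformal Graver decomposition with the $(2(m+1)\tau+1)^{m+1}$ norm bound. That is exactly what Lemma~\ref{lem:sign-compatible-decomposition} proves from scratch via Steinitz and pigeonhole. Your earlier sketch, using prefix sums near the segment plus pigeonhole, would not work as stated. The deviations from the segment are not lattice points, so pigeonhole does not force two of them to coincide. The augmentation makes that sketch unnecessary.

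In the kernel, your labels (profile plus set of blocker profiles) are finer than the paper's types (profile plus vector). This gives $2^{2^{O(m)}}$ classes rather than $2^{O(m^2)}$, which is harmless. Same-label atoms are still independent and have identical vectors, and the enumeration cost $N^{C}$ remains $2^{2^{O(m^2)}}$.
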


As in Section~\ref{sec:exact-quotas}, we apply
Lemma~\ref{lem:profile-sparsification} and set
\(r=\mathrm{OPT}/6\).  We continue to write \(X\) for the reduced
ground set.  Thus distinct points with the same profile are at distance
at least \(2r\), and we fix a feasible set \(S\subseteq X\) with
\(\diver(S)>2r\).


\subsection{Certificates}
\label{sec:overlapping-certificates}

In the exact-quota case, the certificate restores every color count exactly.  With lower
and upper bounds, a certificate may also delete old points that are not forced blockers.
We therefore use two kinds of single moves, called atoms.  Fix a feasible set \(T\) and a
bad point \(x\in\Bad(T)\).  An addition atom adds a new point together with the removal of
its blockers, while a deletion atom removes one further old point.

For \(y\in X\setminus T\), define
\[
        B_x(y)=\{p\in T\setminus\{x\}:\dist(p,y)<r\}.
\]
These are the old points that must be removed if \(y\) is added.  We treat the addition of
\(y\), together with the removal of its blockers, as one addition atom with vector
\[
        \delta(y)=w(y)-\sum_{p\in B_x(y)}w(p).
\]
We also allow an extra deletion atom for each \(z\in T\setminus\{x\}\), with vector
\[
        \delta(z)=-w(z).
\]


For a collection \(Q\) of atoms, let \(A(Q)\subseteq X\setminus T\)
be the set of points whose addition atoms belong to \(Q\), and let
\(E(Q)\subseteq T\setminus\{x\}\) be the set of points whose deletion
atoms belong to \(Q\).  

For $A \subseteq X \setminus T$, write $B_x(A)=\bigcup_{y\in A}B_x(y)$. 
We call \(Q\) \emph{compatible} if the points of \(A(Q)\) are pairwise
at distance at least \(r\), the sets \(B_x(y)\), \(y\in A(Q)\), are
pairwise disjoint, and
\[
        E(Q)\cap B_x(A(Q))=\varnothing.
\]
Thus no point is removed both as a blocker of an addition and by an
extra deletion atom.  Extend \(\delta\) to collections of atoms by
\[
        \delta(Q)=\sum_{a\in Q}\delta(a).
\]

\begin{definition}[Certificate]
\label{def:overlapping-certificate}
A certificate for \(x\) is a compatible collection \(Q\) of atoms such that
\[
        \delta(Q)_0=1,
        \qquad
        \ell\le g(T)-g(x)+\delta(Q)_{1:m}\le u.
\]
The zeroth coordinate is the net change in cardinality, so \(\delta(Q)_0=1\) restores the total
size to \(k\) after \(x\) is removed.  The remaining coordinates are the net changes in the
color counts, so the second condition says that every resulting count lies between its lower
and upper bounds.
\end{definition}

\begin{lemma}
\label{lem:overlapping-certificate-improves}
If \(Q\) is a certificate for \(x\), then
\[
        T'=\bigl(T\setminus(B_x(A(Q))\cup E(Q)\cup\{x\})\bigr)\cup A(Q)
\]
is feasible and satisfies \(\Bad(T')\subseteq\Bad(T)\setminus\{x\}\).  In particular,
\(|\Bad(T')|<|\Bad(T)|\).
\end{lemma}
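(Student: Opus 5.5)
We argue as in Lemma~\ref{lem:certificate-improves}, checking feasibility and the badness containment separately. Write \(R=B_x(A(Q))\cup E(Q)\cup\{x\}\) for the removed set and \(A=A(Q)\) for the additions.

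\textbf{Step 1: bookkeeping.} We verify that the removed set is a disjoint union. The blocker sets \(B_x(y)\), \(y\in A\), are pairwise disjoint by compatibility, and \(E(Q)\) is disjoint from \(B_x(A)\), also by compatibility. Neither family contains \(x\), since \(B_x(y)\subseteq T\setminus\{x\}\) and \(E(Q)\subseteq T\setminus\{x\}\). Moreover \(A\subseteq X\setminus T\) is disjoint from \(T\). Because \(w\) is additive over disjoint unions, we get
\[
w(T')=w(T)-w(x)-\sum_{y\in A}\sum_{p\in B_x(y)}w(p)-\sum_{z\in E(Q)}w(z)+\sum_{y\in A}w(y)=w(T)-w(x)+\delta(Q).
\]
The disjointness is exactly what makes this identity hold. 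Without it, a point lying in two blocker sets, or lying in both \(E(Q)\) and \(B_x(A)\), would have \(w\) subtracted twice in \(\delta(Q)\) but be removed only once.

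\textbf{Step 2: feasibility.} Read off the coordinates of this identity. In coordinate zero, \(|T'|=k-1+\delta(Q)_0=k\). In the remaining coordinates, \(g(T')=g(T)-g(x)+\delta(Q)_{1:m}\), and this lies in \([\ell,u]\) by the second condition of Definition~\ref{def:overlapping-certificate}. Hence \(T'\) is feasible.

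\textbf{Step 3: no new bad pairs.} Take any pair \(p\ne q\) in \(T'\) and consider three cases.
\begin{itemize}
\item If both points are in \(A\), they are at distance at least \(r\) by compatibility.
\item If \(p\in A\) and \(q\in T\setminus R\), then \(d(p,q)\ge r\). Otherwise \(q\in B_x(p)\subseteq R\), since \(q\ne x\).
\item If both points are in \(T\setminus R\), their distance is unchanged.
\end{itemize}
So no point of \(A\) is in \(\Bad(T')\). Any old point in \(\Bad(T')\) must be witnessed by another old point, and therefore already lies in \(\Bad(T)\). Since \(x\notin T'\), we get \(\Bad(T')\subseteq\Bad(T)\setminus\{x\}\). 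Strictness follows because \(x\in\Bad(T)\).

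The main obstacle is not any of these cases, which are routine. It is making sure that the disjointness conditions built into the definition of compatibility are exactly the ones needed for the vector identity in Step 1.
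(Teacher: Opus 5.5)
Your proposal is correct and follows the paper's proof. Like the paper, you use compatibility to make the forced deletions $B_x(A)$ and the extra deletions $E$ disjoint, which gives $w(T')=w(T)-w(x)+\delta(Q)$ (the paper writes this as $\delta(Q)=w(A)-w(D)$), read off feasibility coordinatewise, and then note that additions are $r$-separated from each other and from every surviving old point; you are only slightly more explicit about $x\notin D$ and $A\cap T=\varnothing$.
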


\begin{proof}
Let \(A=A(Q)\) and \(E=E(Q)\).  Since \(Q\) is compatible, no two additions share a blocker, and no
point of \(E\) is already a blocker of an addition.  Thus the forced deletions \(B_x(A)\)
and the extra deletions \(E\) are disjoint.  With \(D=B_x(A)\cup E\), we have
\[
        \delta(Q)=w(A)-w(D).
\]
The zeroth coordinate gives \(|A|=|D|+1\), so
\[
        T'=T\setminus(D\cup\{x\})\cup A
\]
has size \(k\).  Its color-count vector is
\[
        g(T')=g(T)-g(x)+\delta(Q)_{1:m},
\]
which lies between \(\ell\) and \(u\).  Thus \(T'\) is feasible.  The points of \(A\) are
pairwise at distance at least \(r\) because \(Q\) is compatible.  Every point of
\(T\setminus(D\cup\{x\})\) is also at distance at least \(r\) from every point of \(A\),
since all closer points are blockers and belong to \(D\).  The pairs among the old points
that remain are unchanged.  Therefore $\Bad(T')\subseteq \Bad(T)\setminus\{x\}$.
Since \(x\in\Bad(T)\), this containment gives \(|\Bad(T')|<|\Bad(T)|\).
\end{proof}

\subsection{A Bounded Certificate}
\label{sec:bounded-overlapping-certificate}

Suppose that \(T\) is bad, and choose an endpoint \(x\notin S\) of a bad edge.  Such an
endpoint exists because \(\diver(S)>2r\).  A blocker of a point \(y\in S\setminus T\)
cannot belong to \(S\), because it would be a distinct point of \(S\) within distance
\(r\) of \(y\).  Hence every blocker lies in
\((T\setminus S)\setminus\{x\}\).  Moreover, two points of \(S\setminus T\) cannot share
a blocker, since that would place them at distance less than \(2r\).

Let \(\Omega\) contain the addition atom for every \(y\in S\setminus T\), together with a
deletion atom for every point of
\[
        (T\setminus S)\setminus\bigl(\{x\}\cup B_x(S\setminus T)\bigr).
\]
The added points lie in \(S\), so they are pairwise more than \(2r\) apart.  Their blocker
sets are disjoint, and the extra deletion atoms were chosen outside
\(B_x(S\setminus T)\).  Thus the atoms are compatible.  Applying them after removing \(x\)
turns \(T\) exactly into \(S\), so
\[
        \delta(\Omega)=w(S)-w(T\setminus\{x\}).
\]
In particular,
\[
        \delta(\Omega)_0=1,
        \qquad
        g(T)-g(x)+\delta(\Omega)_{1:m}=g(S).
\]
Since \(S\) is feasible, \(\Omega\) is a certificate.  Our remaining task is to show that
\(\Omega\) contains a subcollection of size bounded only by \(m\) that is itself a
certificate.

An addition has at most one blocker of each profile because points with the same profile are
\(2r\)-separated.  It has no blocker with its own profile, since such a blocker would be
within distance \(r\) of a point with the same profile.  Thus \(|B_x(y)|\le\tau-1\).  In
each coordinate, the addition contributes at most \(1\), while its blockers subtract at
most \(\tau-1\).  Every coordinate of an addition vector therefore lies between
\(-(\tau-1)\) and \(1\).  A deletion vector has coordinates in \(\{-1,0\}\), so every
atom vector has \(\ell_\infty\)-norm at most \(\tau\).

Put
\[
        H=\bigl(2(m+1)\tau+1\bigr)^{m+1},
        \qquad
        C=(m+1)H.
\]

Steinitz proved the original rearrangement lemma in~\cite{Steinitz1913}.  The
\(d\tau\) bound below is due to Grinberg and
Sevast'yanov~\cite{GrinbergSevastyanov1980}.  For completeness, we include their
proof, specialized to zero-sum vectors with the \(\ell_\infty\)-norm.

\begin{lemma}[Steinitz rearrangement]
\label{lem:steinitz}
Let \(z_1,\ldots,z_N\in\mathbb R^d\) satisfy
\[
        \sum_{i=1}^N z_i=0,
        \qquad
        \|z_i\|_\infty\le \tau.
\]
Then the vectors can be ordered so that every partial sum has
\(\ell_\infty\)-norm at most \(d\tau\).
\end{lemma}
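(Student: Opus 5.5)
We will follow the Grinberg--Sevast'yanov linear-algebraic argument: build the ordering backwards, peeling off one vector at a time while maintaining a fractional certificate that the remaining vectors can be weighted so their weighted sum is zero. For convenience, assume \(N>d\); otherwise every partial sum is a sum of at most \(d\) vectors and trivially has norm at most \(d\tau\).

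The construction is by downward induction on \(n\), from \(N\) to \(d+1\). At each step we maintain a set \(I_n\subseteq[N]\) of size \(n\) and coefficients \(\lambda^{(n)}\in[0,1]^{I_n}\) satisfying
\[
        \sum_{i\in I_n}\lambda^{(n)}_i=n-d,
        \qquad
        \sum_{i\in I_n}\lambda^{(n)}_i z_i=0 .
\]
Initially \(I_N=[N]\) and \(\lambda^{(N)}_i=(N-d)/N\) for all \(i\); this works because \(\sum_i z_i=0\). Given \(I_n\) with \(n>d+1\), we pass to \(n-1\) as follows. Consider the polytope of \(\mu\in[0,1]^{I_n}\) subject to the \(d+1\) linear equations \(\sum\mu_i z_i=0\) and \(\sum\mu_i=n-1-d\). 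It is nonempty, since \(\mu=\frac{n-1-d}{n-d}\lambda^{(n)}\) lies in it. Let \(\mu\) be a vertex. At a vertex, at least \(n-(d+1)\) of the box constraints are tight, so at most \(d+1\) coordinates are fractional and the rest are \(0\) or \(1\).

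Next we show some coordinate equals \(0\). If none were \(0\), then at least \(n-d-1\) coordinates equal \(1\) and the others are positive. Since the coordinates sum to \(n-1-d\), the positive ones would have to vanish, which is a contradiction, except in the boundary case where exactly \(n-d-1\) coordinates are \(1\) and the remaining \(d+1\) are zero. That case also yields a zero coordinate. Choose an index \(j\) with \(\mu_j=0\), set \(I_{n-1}=I_n\setminus\{j\}\) and \(\lambda^{(n-1)}=\mu|_{I_{n-1}}\). Placing \(z_j\) in position \(n\) of the ordering, the vectors in positions \(1,\ldots,n-1\) are exactly \(I_{n-1}\). The remaining \(d+1\) indices of \(I_{d+1}\) fill the first positions in arbitrary order.

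It remains to bound the partial sums. For \(n>d\), the prefix sum equals
\[
        \sum_{i\in I_n}z_i=\sum_{i\in I_n}\bigl(1-\lambda^{(n)}_i\bigr)z_i .
\]
Its \(\ell_\infty\)-norm is at most \(\tau\sum_i(1-\lambda^{(n)}_i)=\tau\bigl(n-(n-d)\bigr)=d\tau\). For \(n\le d\), the prefix is a sum of at most \(d\) vectors, so its norm is at most \(d\tau\). The same bound covers the \(n=d+1\) prefix via \(I_{d+1}\), whose coefficients sum to \(1\).

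The main obstacle is the vertex-counting step guaranteeing a zero coordinate. We handle it by counting the tight box constraints at a vertex of a polytope cut out by \(d+1\) equalities. That number of equalities is exactly what the added normalization \(\sum\mu_i=n-1-d\) is calibrated to: it is the reason the bound is \(d\tau\) rather than \((d+1)\tau\).
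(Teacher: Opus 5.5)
Your proof is correct and is essentially the paper's Grinberg--Sevast'yanov argument. The differences are cosmetic. You take a vertex of the polytope \(\{\mu\in[0,1]^{I_n}:\sum\mu_iz_i=0,\ \sum\mu_i=n-1-d\}\), whereas the paper takes a weighting with the fewest fractional entries; both rest on the same basic-solution idea. You also stop peeling at \(d+1\) indices rather than \(d\), which is harmless because the weights on \(I_{d+1}\) still sum to \(1\).

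The only blemish is the wording of the zero-coordinate step: your ``boundary case'' contradicts your own assumption that no coordinate is zero. It is cleaner to argue directly on the number of ones. If it is exactly \(n-d-1\), the remaining \(d+1\) coordinates must all be \(0\). If it is smaller, one of the at least \(n-d-1\) coordinates in \(\{0,1\}\) is \(0\). If it is larger, the sum exceeds \(n-d-1\).
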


\begin{proof}
If \(N\le d\), any ordering works by the triangle inequality, so assume \(N>d\).
We repeatedly delete vectors while maintaining, on the current index set \(A\) of size
\(q\), weights \(\lambda_i\in[0,1]\) such that
\(\sum_{i\in A}\lambda_i z_i=0\) and \(\sum_{i\in A}\lambda_i=q-d\).
Initially take \(A=[N]\) and \(\lambda_i=1-d/N\).

Suppose the current index set \(A\) has size \(q>d\).  Scaling its weights by
\((q-d-1)/(q-d)\) gives a weighting with total weight \(q-d-1\) and weighted vector sum
zero.  Among all such weightings of \(A\), choose one with as few fractional entries as
possible.  At most \(d+1\)
entries are fractional: otherwise the corresponding vectors
\((1,z_i)\in\mathbb R^{d+1}\) are linearly dependent, and moving the weights along this
dependence preserves both sums until another weight reaches \(0\) or \(1\).  Consequently,
at least \(q-d-1\) weights lie in \(\{0,1\}\), and some weight is zero.  Indeed, if none
were zero, all the integral weights would equal \(1\).  Either there would be more than
\(q-d-1\) of them, or the remaining positive fractional weights would make the total
exceed \(q-d-1\).  Delete a vector of weight zero.  The remaining \(q-1\) vectors satisfy
the invariant again.

When \(d\) vectors remain, order them arbitrarily and then append the deleted vectors in
reverse deletion order.  Every prefix with index set \(A\) and \(|A|=q\ge d\) is one of
the sets encountered above.  For its associated weights,
\(\sum_{i\in A}z_i=\sum_{i\in A}(1-\lambda_i)z_i\), and the nonnegative coefficients on
the right sum to \(q-(q-d)=d\).  The prefix therefore has \(\ell_\infty\)-norm at most
\(d\tau\).  Shorter prefixes satisfy the same bound by the triangle inequality.
\end{proof}

For a finite indexed collection \(\mathcal V\) of vectors in \(\Z^{m+1}\), write
\(v(\mathcal A)=\sum_{a\in\mathcal A}v_a\) for \(\mathcal A\subseteq\mathcal V\).
Call \(\mathcal A\) sign-compatible with \(\mathcal V\) if, for every coordinate \(j\),
\[
        v(\mathcal V)_j\ge0\ \Longrightarrow\ v(\mathcal A)_j\ge0,
        \qquad
        v(\mathcal V)_j\le0\ \Longrightarrow\ v(\mathcal A)_j\le0.
\]
In particular, if \(v(\mathcal V)_j=0\), then sign compatibility requires
\(v(\mathcal A)_j=0\).
\begin{lemma}
\label{lem:sign-compatible-decomposition}
Let \(\mathcal V\) be a finite indexed collection of vectors in \(\Z^{m+1}\) with
\(\ell_\infty\)-norm at most \(\tau\).  Then \(\mathcal V\) has a partition
\[
        \mathcal V=P_1\sqcup\cdots\sqcup P_t
\]
into nonempty collections that are sign-compatible with \(\mathcal V\), each containing
at most \(H\) vectors.
\end{lemma}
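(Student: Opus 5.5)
The plan is to extract, one at a time, small subcollections whose sums lie coordinatewise between $0$ and $v(\mathcal V)$, and to find each one by a pigeonhole argument on Steinitz partial sums.

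\emph{Normalization.} Write $s=v(\mathcal V)$. Negating every coordinate $j$ with $s_j<0$ in all vectors preserves $\ell_\infty$-norms and sign compatibility, so I will assume $s\ge0$. I will prove a stronger, conformal statement by induction on $|\mathcal V|$: there is a partition into nonempty parts $P_i$ with $|P_i|\le H$ and
\[
0\le v(P_i)\le s .
\]
This stronger statement implies the lemma, since $v(P_i)_j\ge0$ always and $v(P_i)_j=0$ whenever $s_j=0$.

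\emph{Induction.} If $|\mathcal V|\le H$, take $t=1$. Otherwise I find one nonempty $P\subseteq\mathcal V$ with $|P|\le H$ and $0\le v(P)\le s$. The remainder $\mathcal V'=\mathcal V\setminus P$ has sum $s'=s-v(P)$, which satisfies $0\le s'\le s$. By induction $\mathcal V'$ splits into parts $P_i'$ with $0\le v(P_i')\le s'\le s$. These parts, together with $P$, give the required partition of $\mathcal V$.

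\emph{Extracting one part.} Let $N=|\mathcal V|$ and enumerate the vectors as $v_1,\dots,v_N$. Since $0\le s_j\le N\tau$, I can write $s=\sum_{i=1}^N u_i$ with integer vectors $u_i\in[0,\tau]^{m+1}$; for instance, take the prefix sums of the $u_i$ to be $\lfloor is/N\rfloor$. Put $w_i=v_i-u_i$. These vectors sum to zero and have bounded $\ell_\infty$-norm, so Lemma~\ref{lem:steinitz} with $d=m+1$ gives an ordering of the $w_i$ (hence of the pairs $(v_i,u_i)$) whose partial sums are integer vectors in a box of radius $O(d\tau)$.

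Among the first $H+1$ prefix sums, including the empty prefix, two must coincide as soon as the box has at most $H$ integer points. Say prefixes $a<b$ coincide. Then the block $P=\{v_{a+1},\dots,v_b\}$ satisfies
\[
v(P)=\sum_{i=a+1}^b u_i ,
\]
which lies in $[0,s]$ because the $u_i$ are nonnegative and sum to $s$. Moreover $1\le|P|\le H$, which is exactly what the induction needs.

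\emph{Main obstacle.} The main difficulty is matching the constant $H=(2(m+1)\tau+1)^{m+1}$. Naively $\|w_i\|_\infty\le2\tau$, which gives a box of radius $2d\tau$ and about $(4d\tau+1)^d$ points, too many. I would first use that $v_i$ and $u_i$ are sign-structured to tighten this. Since $u_i\ge0$, we have $w_i\le v_i\le\tau$ coordinatewise, so the Steinitz weight argument bounds each coordinate of a partial sum on one side by roughly $d\tau$ and on the other side by a separate quantity. If necessary, I would instead choose the $u_i$ adaptively, so that each $w_i$ has entries in $[-\tau,\tau]$, by letting $u_i$ absorb only nonnegative parts that keep the running budget bounded. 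This would give radius $d\tau$ and exactly $(2d\tau+1)^d=H$ lattice points. Since I need $H+1$ prefixes but $N>H$ guarantees them, the count then works out exactly.
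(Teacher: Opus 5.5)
Your peeling argument works and is a genuinely different route from the paper's, provided you commit to the second of your two tightenings.

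\emph{The first tightening does not reach $H$.} The one-sided bound $w_i\le v_i\le\tau$ alone places prefix sums at best in $[-2d\tau,d\tau]^d$, which has $(3d\tau+1)^d>H$ points.

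\emph{The adaptive budgets do.} They are easy to make precise. For each coordinate $j$, pick integers $0\le u_{i,j}\le v_{i,j}+\tau$ with $\sum_i u_{i,j}=s_j$. This is possible since $0\le s_j\le\sum_i(v_{i,j}+\tau)$. Then $-\tau\le w_i\le\tau$, and Lemma~\ref{lem:steinitz} puts all prefix sums in $[-d\tau,d\tau]^d$, which has exactly $H$ lattice points. Your pigeonhole on the prefix sums $S_0=0,S_1,\dots,S_H$ then gives a block $P$ with $1\le|P|\le H$. Its sum is a sub-sum of nonnegative budgets, hence lies in $[0,s]$. Your conformal strengthening is actually equivalent to sign compatibility for a partition, since nonnegative parts summing to $s$ are automatically at most $s$.

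\emph{The paper's route.} The paper argues non-constructively. It fixes a partition into the maximum number of sign-compatible parts, so each part $P$ is indecomposable. It augments $P$ with $\|v(P)\|_1$ unit vectors of sign opposite to $v(\mathcal V)$, giving a zero-sum collection with no nonempty proper zero-sum subcollection. It then applies Steinitz plus pigeonhole on $s_1,\dots,s_N$ to bound the size of that collection by $H$.

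\emph{What each buys.} The paper's route gives the structural fact that every indecomposable sign-compatible part is small. The added unit vectors respect the norm bound $\tau$ automatically, so no splitting of the total is needed. Your approach replaces the maximality and augmentation step with an explicit greedy extraction, where each part is a contiguous block of a Steinitz ordering. The price is that you must distribute $s$ carefully to hit the exact constant. With the naive $\lfloor is/N\rfloor$ split you would get $(4d\tau+1)^d$ instead. That still suffices for the $2^{O(m^2)}$ bounds downstream, but not for the stated $H$.
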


\begin{proof}
Choose a partition of \(\mathcal V\) into the maximum possible number of nonempty
sign-compatible parts, and fix one part \(P\).  It cannot itself be partitioned into two
nonempty sign-compatible collections.  Set \(d=m+1\).  For
\(j\in\{0,\ldots,m\}\), let \(f_j\) be the unit vector in coordinate \(j\) of
\(\mathbb R^d\).  If \(v(\mathcal V)_j>0\), add \(v(P)_j\) copies of \(-f_j\).  If
\(v(\mathcal V)_j<0\), add \(-v(P)_j\) copies of \(f_j\).  When
\(v(\mathcal V)_j=0\), sign compatibility gives \(v(P)_j=0\).  These added vectors
cancel \(v(P)\).  Let \(N\) be the size of the
resulting zero-sum collection.  Every vector has \(\ell_\infty\)-norm at most \(\tau\).

We claim that this collection has no nonempty proper zero-sum subcollection.  Suppose one
existed.  It could not consist only of added unit vectors, because all added vectors in one
coordinate have the same sign.  Let \(P'\subseteq P\) be its original vectors.  In coordinate
\(j\), the added unit vectors have sign opposite to \(v(\mathcal V)_j\), so their sum can
cancel \(v(P')_j\) only if \(v(P')_j\) has the same sign as
\(v(\mathcal V)_j\).  When \(v(\mathcal V)_j=0\), no unit vectors were added, so
\(v(P')_j=0\).  Thus \(P'\) is
sign-compatible.  Applying the
same argument to the complementary zero-sum subcollection shows that
\(P\setminus P'\) is nonempty and sign-compatible.  This splits \(P\) into two nonempty
sign-compatible parts, contradicting the choice of the partition.

By Lemma~\ref{lem:steinitz}, choose an ordering \(v_1,\ldots,v_N\) of the augmented
collection such that every partial sum \(s_k=\sum_{h=1}^k v_h\) satisfies
\(\|s_k\|_\infty\le d\tau\).  Each \(s_k\) is an integer point of the box
\([-d\tau,d\tau]^d\), which contains
\[
        (2d\tau+1)^d=H
\]
points.  If \(N>H\), then \(s_i=s_j\) for some \(1\le i<j\le N\).  The vectors in
positions \(i+1,\ldots,j\) form a nonempty zero-sum subcollection, and it is proper because
\(i\ge1\).  This is impossible.  Therefore the augmented collection, and hence \(P\), has
at most \(H\) vectors.  The same argument applies to every part in the partition.
\end{proof}

\begin{lemma}
\label{lem:bounded-overlapping-certificate}
Every bad feasible set \(T\) has a bad point \(x\) with a certificate of size at most
\(C\).
\end{lemma}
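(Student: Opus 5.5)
We prove the statement by taking the large certificate \(\Omega\) constructed above for an endpoint \(x\notin S\) of a bad edge, and extracting a small subcollection from it using Lemma~\ref{lem:sign-compatible-decomposition}. The key observation is that the two feasibility conditions compare the \emph{partial} change \(\delta(Q)\) against the \emph{full} change \(\delta(\Omega)\). If \(Q\subseteq\Omega\) is sign-compatible with \(\Omega\) (as a collection of atom vectors), then in every color coordinate \(j\ge1\) the value \(g(T)_j-g(x)_j+\delta(Q)_j\) lies between \(g(T)_j-g(x)_j\) and \(g(S)_j=g(T)_j-g(x)_j+\delta(\Omega)_j\). The color bounds are the main obstacle, because the intermediate set \(T\setminus\{x\}\) need not itself satisfy \(\ell\le g\le u\). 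We therefore cannot simply interpolate between \(T\setminus\{x\}\) and \(S\), and we must include \(-g(x)\) in the bookkeeping so that the interpolation runs between two feasible endpoints.

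To handle this, we add one auxiliary vector \(v_x=w(x)\) to the collection. Let \(\mathcal V=\{\delta(a):a\in\Omega\}\cup\{w(x)\}\). Its total is \(v(\mathcal V)=w(S)-w(T)\), which has zeroth coordinate \(0\) and color coordinates \(g(S)-g(T)\). All vectors still have \(\ell_\infty\)-norm at most \(\tau\), since \(w(x)\in\{0,1\}^{m+1}\). Apply Lemma~\ref{lem:sign-compatible-decomposition} to obtain parts \(P_1,\dots,P_t\), each sign-compatible and of size at most \(H\). Let \(P_1\) be the part containing \(w(x)\), and set \(Q_1=P_1\setminus\{w(x)\}\). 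We then have \(\delta(Q_1)-g\)-shift equal to \(v(P_1)-w(x)\), so the resulting counts are \(g(T)+v(P_1)_{1:m}\). By sign compatibility, each coordinate of \(g(T)+v(P_1)\) lies between \(g(T)_j\) and \(g(S)_j\), both of which are in \([\ell_j,u_j]\). Hence the color conditions hold. For the cardinality coordinate, \(v(\mathcal V)_0=0\) forces \(v(P_1)_0=0\), i.e.\ \(\delta(Q_1)_0=w(x)_0=1\). So \(Q_1\) is a certificate, provided that compatibility is inherited by subcollections. This holds because distances, disjointness of blocker sets, and disjointness of \(E(Q)\) from \(B_x(A(Q))\) are all monotone under taking subsets.

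This would give size at most \(H\le C\). If the extra vector \(w(x)\) turns out to be awkward to treat inside Lemma~\ref{lem:sign-compatible-decomposition}, the fallback is to decompose \(\{\delta(a):a\in\Omega\}\) alone and accumulate parts. Since the zeroth coordinate of \(\delta(\Omega)\) is \(1>0\), each part has nonnegative zeroth sum. We then need to pick parts whose union has zeroth coordinate exactly \(1\) while keeping the colors in range, which is where a \((m+1)\)-fold combination of parts (explaining \(C=(m+1)H\)) would come in. I expect the main care to go into this bookkeeping of the \(-g(x)\) term against the bounds \(\ell,u\). The cleanest route is the auxiliary-vector trick above, which makes the endpoints \(T\) and \(S\) both feasible.
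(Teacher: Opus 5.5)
Your auxiliary-vector route is correct once one sign is fixed. It is genuinely different from the paper's argument, and it gives a sharper bound.

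\textbf{The sign error.} The auxiliary vector has to record the \emph{removal} of \(x\), so you need \(v_x=-w(x)\), not \(w(x)\). With \(v_x=w(x)\) the total is \(\delta(\Omega)+w(x)=w(S)-w(T)+2w(x)\), whose zeroth coordinate is \(2\). Sign compatibility then no longer forces \(v(P_1)_0=0\), and neither \(\delta(Q_1)_0=1\) nor ``the resulting counts are \(g(T)+v(P_1)_{1:m}\)'' follows.

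With \(v_x=-w(x)\), everything you assert holds:
\(v(\mathcal V)=\delta(\Omega)-w(x)=w(S)-w(T)\) has zeroth coordinate \(0\), and \(\delta(Q_1)=v(P_1)+w(x)\) (not \(v(P_1)-w(x)\)). Hence \(\delta(Q_1)_0=0+1=1\) and \(g(T)-g(x)+\delta(Q_1)_{1:m}=g(T)+v(P_1)_{1:m}\).

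\textbf{A step to make explicit.} Sign compatibility of \(P_1\) alone only gives the sign of \(v(P_1)_j\). To place \(v(P_1)_j\) between \(0\) and \(g(S)_j-g(T)_j\), you use that the other parts are sign-compatible too. Then \(\sum_{i\ne1}v(P_i)_j\) has the same sign as \(v(\mathcal V)_j\), which gives \(|v(P_1)_j|\le|v(\mathcal V)_j|\). Also note that \(P_1\neq\{-w(x)\}\), since that part would have zeroth coordinate \(-1\). So \(Q_1\) is a nonempty subcollection of \(\Omega\), compatible by monotonicity, and a certificate with at most \(H-1\le C\) atoms.

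\textbf{Comparison with the paper.} The paper decomposes \(\{\delta(a):a\in\Omega\}\) without any auxiliary vector. Its interpolation therefore starts at \(b=g(T)-g(x)\), which may sit one below a lower bound. It takes the unique part \(P_*\) with zeroth coordinate \(1\). For each color \(j\) with \(b_j+\delta(P_*)_j<\ell_j\), which forces \(b_j=\ell_j-1\), it adds a helper part with zeroth coordinate \(0\) and positive \(j\)-coordinate. This is essentially your fallback, and it is where \(C=(m+1)H\) comes from.

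Putting the removal of \(x\) into the collection makes both interpolation endpoints, \(T\) and \(S\), feasible, so a single part suffices. This removes the case analysis over the set \(J\) and improves the bound from \((m+1)H\) to \(H-1\). Both bounds are \(2^{O(m^2)}\), so the running time of the overall algorithm is unchanged.
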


\begin{proof}
The collection \(\Omega\) contains the addition atom for every point of \(S\setminus T\)
and a deletion atom for every point of
\((T\setminus S)\setminus(\{x\}\cup B_x(S\setminus T))\).  It is compatible, and applying
its atoms to \(T\setminus\{x\}\) gives \(S\).  Therefore
\[
        \delta(\Omega)_0=1,
        \qquad
        \ell\le g(T)-g(x)+\delta(\Omega)_{1:m}\le u.
\]
Apply Lemma~\ref{lem:sign-compatible-decomposition} to the indexed collection of atom
vectors \(\{\delta(a):a\in\Omega\}\), and write
\[
        \Omega=P_1\sqcup\cdots\sqcup P_t
\]
as a partition into nonempty subcollections of atoms.  Thus the \(P_i\) are pairwise
disjoint and cover \(\Omega\), each has at most \(H\) atoms, and each is sign-compatible.

Since \(\delta(\Omega)_0=1\), sign compatibility gives \(\delta(P_i)_0\ge0\) for every
\(i\).  These integers sum to \(1\), so exactly one part, say \(P_*\), has
\(\delta(P_*)_0=1\), and every other \(P_i\) has zeroth coordinate zero.  Put
\(b=g(T)-g(x)\), and let
\[
        J=\{j\in[m]:b_j+\delta(P_*)_j<\ell_j\}
\]
be the colors whose lower bounds are not restored by \(P_*\).

Fix \(j\in J\).  If \(\delta(\Omega)_j\le0\), then every \(P_i\) has nonpositive
\(j\)-coordinate, so
\(\delta(P_*)_j\ge\delta(\Omega)_j\).  This would give
\(b_j+\delta(P_*)_j\ge b_j+\delta(\Omega)_j\ge\ell_j\), a contradiction.  Hence
\(\delta(\Omega)_j>0\), so every \(P_i\) has nonnegative \(j\)-coordinate.
Feasibility of \(T\) gives \(b_j\ge\ell_j-1\), and the strict inequality
now forces \(b_j=\ell_j-1\) and \(\delta(P_*)_j=0\).  Since
\(b_j+\delta(\Omega)_j\ge\ell_j\), there is an index \(i(j)\ne *\) with
\(\delta(P_{i(j)})_j>0\).  Choose one such index for each \(j\in J\).

Let
\[
        \Omega'=P_*\cup\bigcup_{j\in J}P_{i(j)}.
\]
This is a union of at most \(m+1\) parts of the partition.  Every selected part other than
\(P_*\) has zeroth coordinate zero.  Therefore
\[
        |\Omega'|\le(m+1)H=C,
        \qquad
        \delta(\Omega')_0=1.
\]

Fix a color \(j\).  We first verify the lower bound.  If
\(\delta(\Omega)_j<0\), then every part has nonpositive \(j\)-coordinate.  Since
\(\Omega'\subseteq\Omega\), omitting parts can only make the total change less negative, so
\(\delta(\Omega')_j\ge\delta(\Omega)_j\).  The full exchange is feasible, so
\(b_j+\delta(\Omega')_j\ge b_j+\delta(\Omega)_j\ge\ell_j\).
Now suppose \(\delta(\Omega)_j\ge0\).  Every selected part has nonnegative
\(j\)-coordinate.  If \(P_*\) already restores the lower bound, then so does \(\Omega'\).
Otherwise \(j\in J\).  In this case we proved that \(b_j=\ell_j-1\) and included a part
\(P_{i(j)}\) with positive integer \(j\)-coordinate.  Every other helper part included in
\(\Omega'\) also has nonnegative \(j\)-coordinate, so none can undo this gain.  Hence
\(b_j+\delta(\Omega')_j\ge\ell_j\).

For the upper bound, suppose first that \(\delta(\Omega)_j\ge0\).  All parts have
nonnegative \(j\)-coordinate, so \(\delta(\Omega')_j\le\delta(\Omega)_j\).  Hence
\(b_j+\delta(\Omega')_j\le b_j+\delta(\Omega)_j\le u_j\).
If \(\delta(\Omega)_j<0\), then \(\delta(\Omega')_j\le0\), and hence
\(b_j+\delta(\Omega')_j\le b_j\le g(T)_j\le u_j\).
Thus
\[
        \ell\le b+\delta(\Omega')_{1:m}\le u.
\]

The collection \(\Omega'\) is compatible because it is a subcollection of \(\Omega\), and
\(\delta(\Omega')_0=1\).  Hence
\[
        \delta(\Omega')_0=1,
        \qquad
        \ell\le g(T)-g(x)+\delta(\Omega')_{1:m}\le u,
\]
so \(\Omega'\) is a certificate of size at most \(C\).
\end{proof}

\subsection{Sparsifying the Candidate Set and Finding a Certificate}
\label{sec:finding-overlapping-certificate}

\begin{lemma}
\label{lem:find-overlapping-certificate}
Suppose distinct points with the same profile are at distance at least \(2r\), and the
instance has an optimal solution of diversity greater than \(2r\).  Given a bad feasible
set \(T\), one can find a bad point \(x\) and a certificate for \(x\) in time
\(2^{2^{O(m^2)}}\poly(n)\).
Without these assumptions, the same search may report \(\mathrm{FAIL}\).
\end{lemma}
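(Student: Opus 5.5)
The plan mirrors the proof of Lemma~\ref{lem:find-exact-certificate}, now with two kinds of atoms. For each \(x\in\Bad(T)\) I would build a conflict graph \(H_x\) whose vertices are all atoms. These are the addition atoms for \(y\in X\setminus T\) and the deletion atoms for \(z\in T\setminus\{x\}\). Two addition atoms \(y,y'\) conflict if \(\dist(y,y')<r\) or \(B_x(y)\cap B_x(y')\ne\varnothing\). A deletion atom \(z\) conflicts with an addition atom \(y\) if \(z\in B_x(y)\). With these edges, the independent sets of \(H_x\) are exactly the compatible collections.

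Each atom gets a label. For an addition atom \(y\), the label is \((\sigma(y),\{\sigma(p):p\in B_x(y)\})\), which is a profile together with a set of profiles. For a deletion atom \(z\), the label is \(\sigma(z)\). Since \(B_x(y)\) contains at most one point of each profile, the label determines \(\delta(y)\); it also determines \(\delta(z)\) for a deletion atom. The number of labels is therefore at most \(\tau 2^{\tau}+\tau=2^{O(2^m)}\).

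Atoms with the same label are pairwise nonadjacent, for the following reasons. Two same-label additions share a profile, so they are \(2r\)-apart, and a common blocker would put them within distance \(<2r\). Two deletion atoms never conflict.

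Next I would bound the maximum degree \(\Delta\) by a function of \(\tau\).
\begin{enumerate}[label=(\roman*)]
\item An addition \(y\) has at most one addition neighbor within distance \(r\) per profile, by \(2r\)-separation, giving at most \(\tau\) such neighbors.
\item It has at most \(\tau-1\) blockers, and each blocker is within distance \(r\) of at most one candidate per profile, giving at most \(\tau^2\) addition neighbors that share a blocker.
\item It conflicts with at most \(\tau-1\) deletion atoms, namely its blockers.
\item A deletion atom \(z\) is a blocker of at most one addition per profile, so it has at most \(\tau\) neighbors.
\end{enumerate}
Hence \(\Delta=O(\tau^2)=2^{O(m)}\).

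Now fix \(C=(m+1)H\) from Lemma~\ref{lem:bounded-overlapping-certificate}. In each label class I retain up to \(C(\Delta+1)\) atoms. By Lemma~\ref{lem:bounded-overlapping-certificate}, some \(x\) has a certificate \(Q^\star\) with \(|Q^\star|\le C\). The label-by-label replacement argument from the exact case then carries over verbatim. When the current collection uses \(a\) atoms from a truncated class, the other at most \(C-a\) chosen atoms block at most \((C-a)\Delta\) retained atoms of that class. This leaves at least \(a\) free retained atoms. They are mutually independent and carry the same vectors \(\delta\), so both compatibility and the certificate conditions (\(\delta_0=1\) and the bounds \(\ell,u\)) are preserved. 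Because certificate validity depends only on \(\delta(Q)\) and compatibility, the modified collection is again a certificate, and it now uses only retained atoms.

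The kernel therefore has size \(K\le 2^{O(2^m)}\cdot C(\Delta+1)\). Note that \(C=(m+1)(2(m+1)\tau+1)^{m+1}=2^{O(m^2)}\), using \(\tau\le2^m\). Combined with the label count this gives \(K=2^{O(m^2)}\cdot 2^{O(2^m)}\), though I would double-check that the label count stays within the claimed bound. I would enumerate all subcollections of size at most \(C\) and test compatibility and the certificate inequalities directly. This costs \(K^{C}\cdot\poly(n)=2^{O(m^2)\cdot 2^{O(m^2)}}\poly(n)=2^{2^{O(m^2)}}\poly(n)\), and looping over \(x\in\Bad(T)\) adds a polynomial factor. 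Without the assumptions, no certificate may be found, and the search then reports \(\mathrm{FAIL}\).

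The main obstacle I expect is the counting bookkeeping. The \(2^{\tau}\) label count is doubly exponential in \(m\), so I need to make sure it is absorbed: \(\log K=O(2^m+m^2)\), multiplied by \(C=2^{O(m^2)}\), still gives \(2^{O(m^2)}\) in the exponent. A second point to check carefully is the degree bound for the deletion–addition conflicts.
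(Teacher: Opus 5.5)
Your proposal is correct and follows the paper's proof. The steps match: the conflict graph $H_x$ on addition and deletion atoms whose independent sets are the compatible collections, the $O(\tau^2)$ degree bound, truncation of each class to $C(\Delta+1)$ atoms, the class-by-class replacement of the bounded certificate from Lemma~\ref{lem:bounded-overlapping-certificate}, and enumeration of subcollections of size at most $C$. The only difference is that you label an addition atom by its profile together with the set of its blockers' profiles, giving $2^{O(2^m)}$ classes, whereas the paper labels it by $(+,\sigma(y),\delta(y))$, giving $\tau(\tau+1)^{m+1}+\tau=2^{O(m^2)}$ classes. Your labeling makes the kernel larger, but, as you check, this is absorbed because $O(2^m)\cdot 2^{O(m^2)}=2^{O(m^2)}$ in the exponent.
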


\begin{proof}
For each \(x\in\Bad(T)\), let \(H_x\) be the conflict graph on the addition and deletion
atoms for \(x\).  Two additions \(y,y'\) are adjacent if
\[
        \dist(y,y')<r
        \qquad
        \text{or}
        \qquad
        B_x(y)\cap B_x(y')\ne\varnothing.
\]
An addition atom \(y\) and a deletion atom \(z\) are adjacent if \(z\in B_x(y)\).
Deletion atoms are not adjacent to one another.  Independent sets in \(H_x\) are exactly
the compatible atom collections.

Define the \emph{type} of an addition atom \(y\) as \((+,\sigma(y),\delta(y))\), and the type of a deletion atom
\(z\) as \((-,\sigma(z))\).  The type determines the atom's vector.  The blocker set of
an addition contains at most one point of each profile and none with the profile of the
addition.  Hence every coordinate of \(\delta(y)\) is an integer between
\(-(\tau-1)\) and \(1\).  There are therefore at most
\[
        L=\tau(\tau+1)^{m+1}+\tau
\]
types.  Each type is independent in \(H_x\).  Indeed, two additions of the same type have
the same profile and are therefore at distance at least \(2r\).  They cannot share a
blocker, since that would put them at distance less than \(2r\).  Deletion atoms are never
adjacent.

We next bound the maximum degree.  For each profile different from \(\sigma(y)\), at most
one addition atom \(y'\) satisfies \(\dist(y,y')<r\).  Thus there are at most \(\tau-1\)
such atoms \(y'\).  Moreover,
\(|B_x(y)|\le\tau-1\).  For each \(p\in B_x(y)\), at most \(\tau-2\) other additions
have \(p\) in their blocker set, and the deletion atom corresponding to \(p\) is also
adjacent to \(y\).  Thus
\[
        \deg_{H_x}(y)
        \le (\tau-1)+(\tau-1)^2
        \le \tau^2.
\]
If a deletion atom corresponds to \(p\), its neighbors are precisely the addition atoms
\(y\) for which \(p\in B_x(y)\).  There is at most one such atom of each profile other
than \(\sigma(p)\), so its degree is at most \(\tau-1\).  Therefore the maximum degree
\(\Delta=\Delta(H_x)\) satisfies \(\Delta\le\tau^2\).

In every graph \(H_x\), consider each type separately.  If a type
contains at most \(C(\Delta+1)\) atoms, retain all of them.  Otherwise,
retain an arbitrary \(C(\Delta+1)\) atoms of that type.  We call a type
\emph{truncated} in the latter case.

Lemma~\ref{lem:bounded-overlapping-certificate} guarantees that, for
some \(x\in\Bad(T)\), the full graph \(H_x\) contains a certificate
\(Q^\star\) of size at most \(C\).  This certificate may use atoms that
were not retained.  We show that we can replace its atoms one type at
a time to obtain a certificate using only retained atoms.

Start with \(Q^\star\) and process the types one at a time.  For an
untruncated type, all atoms of that type were retained, so there is
nothing to replace.  Now consider a truncated type that occurs \(a\)
times in the current certificate.  Temporarily remove these \(a\)
atoms.  At most \(C-a\) atoms remain, and they conflict with at most
\((C-a)\Delta\) retained atoms of the
current type.  Since we retained \(C(\Delta+1)\) atoms of this type, at
least
\[
        C(\Delta+1)-(C-a)\Delta
        =C+a\Delta
        \ge a
\]
of them do not conflict with the remaining atoms.  Choose any \(a\) of
these atoms.  They do not conflict with one another because atoms of
the same type form an independent set.  We can therefore use them in
place of the \(a\) removed atoms without creating any conflicts.

Repeat this replacement for every type.  The final collection \(Q\)
uses only retained atoms and has the same number of atoms of each type
as \(Q^\star\).  Since an atom's type determines its vector, we have
\(\delta(Q)=\delta(Q^\star)\).  Thus \(Q\) is compatible and satisfies
the same certificate conditions as \(Q^\star\), so \(Q\) is also a
certificate.

Since \(\tau\le2^m\), we have \(C=2^{O(m^2)}\), \(L=2^{O(m^2)}\), and
\(\Delta=2^{O(m)}\).  Each retained graph therefore has at most $N=LC(\Delta+1)=2^{O(m^2)}$ vertices.  Enumerate every subset of size at most \(C\), and test compatibility and the
certificate conditions directly.  This takes $N^{O(C)}=2^{2^{O(m^2)}}$ time.  For the point \(x\) supplied by
Lemma~\ref{lem:bounded-overlapping-certificate}, the enumeration finds a certificate.
Trying every bad point adds only a polynomial factor.
\end{proof}

\subsection{Finding a Feasible Solution}

In the exact-quota case, a feasible starting set is obtained by choosing \(k_i\) points
of each color \(i\).  With overlapping colors, finding a feasible starting set is a little trickier because selecting one point may change several
color counts. The lemma below helps us find this feasible starting set.

\begin{lemma}
\label{lem:find-feasible-general}
Given a set $Y\subseteq X$, one can find a feasible set $T\subseteq Y$, or determine that none exists, in time
\[
2^{2^{O(m)}}\poly(n).
\]
\end{lemma}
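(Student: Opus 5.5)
Feasibility depends only on how many points of each profile are chosen, so the plan is to reduce the problem to a small integer program whose number of variables depends only on $m$. For each profile $P\in\Pi$ (with $\tau=|\Pi|\le 2^m$), let $c_P=|Y\cap X_P|$. A set $T\subseteq Y$ is determined, for feasibility purposes, by the vector of counts $x_P=|T\cap X_P|$. The constraints $|T|=k$ and $\ell\le g(T)\le u$ then read
\[
\sum_{P}x_P=k,\qquad \ell_j\le \sum_{P\ni j}x_P\le u_j\ (j\in[m]),\qquad 0\le x_P\le c_P,\qquad x\in\Z^{\Pi}.
\]
Conversely, any integer solution gives a feasible $T$: take any $x_P$ points of $Y\cap X_P$ for each $P$. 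So it suffices to decide this integer program and recover a solution, after which constructing $T$ takes $\poly(n)$ time.

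The integer program has $\tau\le 2^m$ variables and $O(m+\tau)$ constraints, with numbers bounded by $n$. I would invoke Lenstra's algorithm, in the Kannan or Frank--Tardos form, which solves integer feasibility in $d$ variables in time $d^{O(d)}\cdot\poly(\text{input length})$. With $d=2^m$, this gives $2^{O(m2^m)}\poly(\log n)$, which is within $2^{2^{O(m)}}\poly(n)$.

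To keep the argument self-contained and closer to the paper's style, I could instead use a Steinitz/proximity argument. Solve the LP relaxation in polynomial time. By the proximity theorem of Cook et al., if an integer solution exists, one exists within $\ell_\infty$ distance $\tau\cdot\Delta'$ of an LP vertex, where $\Delta'$ is the largest subdeterminant of the $0/1$ constraint matrix. This is at most $(m+1)^{(m+1)/2}$ if we write the matrix with only the $m+1$ rows, treating the box bounds separately. Enumerating the integer points in that box costs $(2\tau\Delta'+1)^{\tau}=2^{2^{O(m)}}$ up to polynomial factors.

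The main obstacle is justifying the running-time bound rigorously, whether through Lenstra or through proximity with the box constraints $x_P\le c_P$. I would cite Lenstra--Kannan, since it handles everything directly and the stated bound has slack. One detail to check is that the profiles not present in $Y$ are simply dropped as variables.
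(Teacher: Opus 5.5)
Your proposal is correct and is essentially the paper's proof. You reduce feasibility to an integer program in the at most $2^m$ profile-count variables, with box bounds $0\le x_P\le |Y\cap X_P|$, the cardinality equation and the color bounds, and solve it with Lenstra's algorithm using Kannan's $d^{O(d)}\poly(L)$ bound to get $2^{O(m2^m)}\poly(n)$; the input length is $2^{O(m)}\log n$ rather than $O(\log n)$, but this factor is absorbed, and the proximity-based alternative you sketch is also valid but not needed.
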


\begin{proof}
For each profile $P\in\Pi$, let $n_P=|Y\cap X_P|$, and introduce an integer
variable $c_P$ denoting the number of selected points of profile $P$. A feasible
set exists if and only if
\[
0\le c_P\le n_P\quad(P\in\Pi),\qquad
\sum_{P\in\Pi}c_P=k,\qquad
\ell_j\le\sum_{P\ni j}c_P\le u_j\quad(j\in[m]).
\]
Indeed, a feasible set gives such integers $c_P$, and conversely any integer
solution can be realized by choosing arbitrarily $c_P$ points from $Y\cap X_P$
for each $P$.

This integer program has $d=|\Pi|\le 2^m$ variables. By Lenstra's
fixed-dimensional integer-programming algorithm~\cite{Lenstra1983}, with the
quantitative bound of Kannan~\cite{Kannan1987}, it can be solved in time
$d^{O(d)}\poly(L)$, where $L$ is the input length in bits. Here all coefficients
are $0$ or $1$, all numerical bounds are at most $n$, and there are at most
$2^m$ variables and $O(2^m+m)$ constraints, so $L=2^{O(m)}\log n$. Hence the
running time is
\[
(2^m)^{O(2^m)}\poly(L)
=2^{O(m2^m)}\poly(n)
=2^{2^{O(m)}}\poly(n).
\]
\end{proof}

\begin{proof}[Proof of Theorem~\ref{thm:general-approx}]
The case \(k\le1\) can be solved directly, so assume \(k\ge2\).  Guess \(\mathrm{OPT}\) by
trying every pairwise distance.  For each guess, apply
Lemma~\ref{lem:profile-sparsification} to obtain \(Y\), and set \(r=\mathrm{OPT}/6\).
Use Lemma~\ref{lem:find-feasible-general} to find a feasible set \(T\subseteq Y\).  If no
such set exists, discard the guess.

While \(T\) is bad, use Lemma~\ref{lem:find-overlapping-certificate} to find a certificate
for some bad point \(x\), and apply its exchange.  By
Lemma~\ref{lem:overlapping-certificate-improves}, the new set remains feasible and has fewer
bad points.  If no certificate is found, discard the current guess.  After at most \(n\) iterations, the process stops with \(\diver(T)\ge r\).

For the correct guess, Lemma~\ref{lem:profile-sparsification} guarantees that distinct
points of \(Y\) with the same profile are at distance at least \(2r\), and that \(Y\)
contains a feasible solution of diversity greater than \(2r\).  Thus
Lemma~\ref{lem:find-feasible-general} finds an initial feasible set, and the certificate
search never fails.  The resulting set has diversity at least \(\mathrm{OPT}/6\).
Lemma~\ref{lem:find-feasible-general} takes \(2^{2^{O(m)}}\poly(n)\) time, which is
smaller than the \(2^{2^{O(m^2)}}\poly(n)\) time used by the certificate search.  Trying
every distance guess and applying at most \(n\) certificates per guess changes the running
time only by a polynomial factor.
\end{proof}

\section{Discussion}

The main open question is whether exact-quota FMMD admits an $O(1)$-approximation
in time $\poly(n,m)$. 
The $2^{m-1}$ certificate bound for exact quotas that we use is tight. Thus the exponential
dependence on $m$ in our bounded-swap approach cannot be removed simply by
proving the existence of smaller certificates. Obtaining a polynomial-time
algorithm appears to require a different way of finding or representing large
exchanges, or a different framework.

At the other
extreme, it would be interesting to prove that no constant-factor approximation
for exact-quota FMMD is possible in polynomial time, so that the approximation
ratio must necessarily grow with $m$. It would also be interesting to improve
the dependence on $m$ in our fixed-parameter algorithms, or to improve the
approximation factor $6$ while retaining exact feasibility.

\section*{AI Disclosure}
We developed the local-search algorithm, identified the number of bad points as the appropriate potential function, and proved the existence of improving repairs of size at most \(h(m)\). These ideas, developed without any AI assistance, yielded an algorithm with running time \((mk)^{h(m)}\). After being provided with our draft, GPT-5.6 contributed the idea of using type-based sparsification of the candidate set to improve the running time to \(f(m)\poly(n)\) and also assisted in extending the approach to the general color-constraint setting. We independently verified all AI-assisted arguments and take full responsibility for the correctness and originality of the paper.
\bibliographystyle{alpha}
\bibliography{references}

\end{document}